\newtheorem{thm}{Theorem}
\newtheorem{lem}{Lemma}
\algnewcommand{\Initialize}[1]{%
  \State \textbf{Initialize:}
  \Statex \hspace*{\algorithmicindent}\parbox[t]{.8\linewidth}{\raggedright #1}
}
\newcommand{\ITO}{It$\hat{\mbox{o}}$}
\newcommand\blfootnote[1]{%
  \begingroup
  \renewcommand\thefootnote{}\footnote{#1}%
  \addtocounter{footnote}{-1}%
  \endgroup
}
\newcommand{\vast}{\bBigg@{4.5}}
\newcommand{\Vast}{\bBigg@{7.5}}
\begin{document}

\title{  Nonlinear Self-Interference Cancellation with Learnable Orthonormal Polynomials for Full-Duplex Wireless Systems}


\author{Hyowon~Lee,~\IEEEmembership{Graduate Student Member,~IEEE,} Jungyeon~Kim,~\IEEEmembership{Graduate Student Member,~IEEE,}  Geon~Choi,~\IEEEmembership{Graduate Student Member,~IEEE,} Ian P.~Roberts,~\IEEEmembership{Member,~IEEE,} Jinseok Choi,~\IEEEmembership{Member,~IEEE,} \\     and~Namyoon~Lee,~\IEEEmembership{Senior Member,~IEEE}
\thanks{H.~Lee, J.~Kim, and G.~Choi are with the Department of Electrical Engineering, POSTECH, Pohang, South Korea (e-mail: \{hyowon, jungyeon.kim, simon03062\}@postech.ac.kr).}
\thanks{I.~P.~Roberts is with the Department of Electrical and Computer Engineering, UCLA, Los Angeles, CA 90095, USA (e-mail: ianroberts@ucla.edu).} 
\thanks{J.~Choi is with the School of Electrical Engineering, Korea Advanced Institute of Science and Technology (KAIST), Daejeon 34141, South Korea (e-mail: jinseok@kaist.ac.kr).}
\thanks{N.~Lee is with the School of Electrical Engineering, Korea University, Seoul, South Korea (e-mail: namyoon@korea.ac.kr).}}
\vspace{-2mm}	

\maketitle
\vspace{-12mm}

\begin{abstract} 
Nonlinear self-interference cancellation (SIC) is essential for full-duplex communication systems, which can offer twice the spectral efficiency of traditional half-duplex systems. The challenge of nonlinear SIC is similar to the classic problem of system identification in adaptive filter theory, whose crux lies in identifying the optimal nonlinear basis functions for a nonlinear system. This becomes especially difficult when the system input has a non-stationary distribution. In this paper, we propose a novel algorithm for nonlinear digital SIC that adaptively constructs orthonormal polynomial basis functions according to the non-stationary moments of the transmit signal. By combining these basis functions with the least mean squares (LMS) algorithm, we introduce a new SIC technique, called as the adaptive orthonormal polynomial LMS (AOP-LMS) algorithm. To reduce computational complexity for practical systems, we augment our approach with a precomputed look-up table, which maps a given modulation and coding scheme to its corresponding basis functions. Numerical simulation indicates that our proposed method surpasses existing state-of-the-art SIC algorithms in terms of convergence speed and mean squared error when the transmit signal is non-stationary, such as with adaptive modulation and coding. Experimental evaluation with a wireless testbed confirms that our proposed approach outperforms existing digital SIC algorithms.




\end{abstract}


\section{Introduction}\label{sec1}

\nocite{lee2023nonlinear}%
 \blfootnote{The conference version of of this paper was published in IEEE ICC 2023 \cite{lee2023nonlinear}.}

For decades, wireless communication  {systems} have relied on half-duplex operation to decouple uplink and downlink in frequency and/or time to prevent  {manifesting} so-called self-interference (SI).  {Orthogonalizing resources in this way}, however, incurs an inherent loss in spectrum utilization. To reduce this inefficiency in cellular and Wi-Fi networks, there has been growing interest in in-band full-duplex (FD) wireless systems \cite{Jain:2011,duarte2012experiment,Sabharwal:2010,roberts_chapter},  {which} have the potential to double spectral efficiency compared to traditional half-duplex systems \cite{roberts_chapter,duarte2013design,Sabharwal:2014band,Yin:fullduplex_in_wireless}. 
 Beyond this, FD  {technology} {serves as} a crucial  {enabler} of joint communication and sensing systems \cite{amjad2017full_sensing,Barneto2021_sensing} {and of spectrum sharing and cognitive radio \cite{Etkin:spectrum_sharing,Wang:cognitive}.} In recent years, multiple experimental demonstrations have shown that FD operation is indeed possible in real-world systems {\cite{amjad2017low,chung2015prototyping,roberts_realworld}}.  
 
Sophisticated self-interference cancellation (SIC)  techniques are essential to successfully enabling FD wireless systems \cite{roberts_chapter}. 
State-of-the-art SIC typically involves {multiple stages of} SI reduction, including i) passive SI reduction using circulators or other RF isolation and ii) active cancellation involving analog SIC and digital nonlinear SIC \cite{everett2014passive,Ahmed2015all,duarte2012experiment,Sabharwal:2010,ahmed2013self_nonlinear_distortion,Bharadia2013full,Mikko2015_analog_digital,JY2023Onlearning_magazine}. 
It is well known that analog SIC is {often} necessary to prevent SI from saturating analog-to-digital converters (ADCs) {and other receive chain components} \cite{Sabharwal:2014band,roberts_chapter}. 
Afterwards, digital SIC plays a key role in canceling the residual SI which remains after analog SIC. 
Eliminating this residual SI to the noise  {floor} digitally, however, has proven to be extremely challenging \cite{Korpi2014widely,Sabharwal:2014band}, largely due to (i) the nonlinearity introduced by power amplifiers (PAs), IQ imbalance, and phase-noise, and (ii) the time-varying nature of SI \cite{ahmed2013self_nonlinear_distortion,Korpi2014widely,korpi2016full_timevarying}. 

The time-varying and nonlinear SIC problem is mathematically equivalent to a classical time-varying and nonlinear system identification problem in adaptive filter theory \cite{Haykin_Book2008}. 
The most popular approach to approximate such a system is using the Wiener-Hammerstein model comprised of parallel Hammerstein polynomials (HP) cascaded with linear finite impulse response (FIR) filters \cite{schoukens2011parametric_HPmodel}. 
Using this approximation model, the most straightforward online SIC algorithm, called the HP-LMS algorithm, was proposed in \cite{Book:ding2004digital}. 
The  {key} idea behind this approach is to adopt the least mean squares (LMS) algorithm to optimize the FIR filter coefficients  {using HP basis functions to accurately capture nonlinearity.} 
This SIC algorithm is simple and can indeed eliminate SI to {near} the noise floor, provided that the orders of the HP are  {sufficiently high}. 
The primary drawback of this algorithm is that it suffers from slow convergence due to  {correlations between {HP} basis functions of different orders}.


To boost the convergence speed in  {adapting} the FIR filter coefficients, \cite{Korpi:2015} introduces an orthogonal transformation with a whitening filter to eliminate cross-correlation across nonlinear basis functions. 
We refer to this as an HP-based recursive least squares (RLS) (HP-RLS) algorithm \cite{korpi2016full_timevarying,JY2023Onlearning_magazine}. 
This orthogonal transformation process, however,  {involves} high computational complexity in estimating the sample covariance matrix and in computing the inverse of the covariance matrix for the whitening filter. 
To reduce  {this} complexity, the work  {of} \cite{Jungyeon:2018} proposed harnessing a set of orthogonal basis functions called \ITO-Hermite (IH) polynomials for SIC, assuming the transmit data follows  {the} complex Gaussian distribution. 
In such cases, the IH-LMS algorithm has shown to achieve  {a} convergence speed  {on par with that of} the HP-RLS algorithm, even without estimating the sample covariance matrix \cite{Jungyeon:2018}. 
Although the IH-LMS algorithm can significantly reduce the computational complexity while achieving fast convergence,  {it relies on}  {a} complex Gaussian input. 
This assumption can indeed often be valid in wireless systems which employ orthogonal frequency-division multiplexing (OFDM), {since, with a large number of subcarriers,} the transmitted OFDM symbols  {are} approximately distributed as complex Gaussian by the central limit theorem \cite{araujo2011accuracy}. 
When the transmitted signal distribution is {not} complex Gaussian, however, IH-LMS can suffer from slow convergence  {due} to correlations among basis functions \cite{Haykin_Book2008}.


{In addition to such assumptions on the signal distribution,} most prior studies {on digital SIC} have focused on the case of a stationary transmit signal,  {where} its statistical distribution does not change over time. 
In practice, however,  {this stationary assumption} is often not appropriate, as  {real-world wireless systems employ} adaptive modulation and coding  {and transmit power control\cite{rashid1998transmit,lee2014power}}, which naturally leads to changes in the distribution of the transmitted signal. 
In 5G, for instance, such link adaptation can happen across a single mini-slot 
 \cite{Dulek:2017_ModulationClassification}. 
 {This motivates the need} to design a SIC algorithm  {which accommodates} non-stationary transmit signals in order to  {successfully} realize  {FD operation} in 5G and beyond. 
 {We develop such an algorithm herein, which, to the best of our knowledge, is the first of its kind.}
\subsection{Contributions}
The contributions of this paper are summarized as follows:
\begin{itemize}
    \item We first show the existence of a set of orthonormal nonlinear basis functions which relate the transmit signal to received SI.
    The key idea to constructing the proposed basis functions is to generalize the IH polynomials according to the (possibly time-varying) moments of the transmitted signal. 
    \item  Then, we propose a computationally efficient algorithm that constructs the orthonormal nonlinear basis functions. 
    By harnessing the recursive structure in identifying the coefficients of IH polynomials, we show that 
     {the construction of $p$th-order orthonormal polynomial is possible with a computational complexity of $\mathcal{O}(p^2)$.} 




    
\item  Cascading the constructed  {orthonormal} basis functions with the LMS algorithm, we present a SIC algorithm for non-stationary input data,  {which we refer to} as the AOP-LMS algorithm. 
In the moment learning phase of this algorithm, the moments of transmitted data symbols are estimated and then the set of  {orthonormal} basis functions is systematically constructed. 
 {Then,} in the filter coefficient learning phase,  {our algorithm uses the LMS algorithm to refine and adapt these coefficients.} 

\item  {We further accelerate the proposed approach with a look-up table containing precomputed orthonormal basis functions for various signal distributions (e.g, $M$-QAM). In practice, such a look-up table can be constructed a priori based on the known set of modulation and coding schemes (MCSs) employed by a transceiver and can then be referenced in real  time as the MCS changes.}


\item  Using numerical simulation, we show that our proposed SIC algorithm provides significant  {improvement} in both mean squared error (MSE) and convergence speed compared to the state-of-the-art algorithms, including the HP-RLS algorithm, under non-stationary  {input distributions}. 

\item  We also verify the effectiveness of the proposed SIC algorithm using a real-time wireless testbed.  {These experimental results further confirm that our proposed approach---and its ability to adapt to changes in modulation---translates from theory to implementation.}

\end{itemize}

\subsection{Organization}

This paper is  {organized} as follows. Section II defines the system model of the FD wireless communication system and  {motivates} the need for pair-wise orthogonality with respect to the basis of nonlinear LMS algorithms. Section III  {presents a method} to construct  {orthonormal} polynomials using partial moments of the input signal. 
Section IV provides orthonormal polynomials  {for} various input signal distributions.  {Section V introduces the proposed digital SIC algorithms based on the formulations laid forth in Sections III-IV.} Section VI and Section VII show results of  {the proposed} SIC algorithms  {through numerical simulation and real-world experiments.} Section VIII concludes the paper.

\section{System Model}\label{Sec:system_model}



We consider a typical FD transceiver compromised employing digital SIC, as illustrated in Fig.~\ref{fig:system_model}.
We do not explicitly consider analog SIC in this work, but rather assume it could be applied directly in conjunction with the method proposed herein.
We denote the complex baseband transmit  {sample} at time slot $n$ by $x[n]$. 
The  {samples} of $\{x[n]\}$  {are} assumed to  {be statistically} independent and identically distributed (IID) and  {to be non-stationary.}  
The complex baseband symbol, $x[n]$, passes through a digital-to-analog converter (DAC), upconversion, a nonlinear PA, the SI channel, a low-noise amplifier (LNA), downconversion, and an ADC before being observed digitally at the receiver.
The collective response of these components and processes  is in general  a nonlinear system.
It is often assumed that the dominant nonlinear effects are introduced by high-order harmonics of practical PAs \cite{korpi2014full_system_calculation,Korpi2014widely}, but the proposed method herein does not rely on this assumption. 
Let $y[n]$ be the received signal at time slot $n$, which is assumed to be some nonlinear  {combination of the current and} previous $L-1$ transmit  {samples} ${\bf x}_L[n] \triangleq \left[x[n],x[n-1],\ldots, x[n-L+1]\right]^{\top}$. 
 {To capture this, let us define a nonlinear function $f:\mathbb{C}^L\rightarrow \mathbb{C}$ which relates the received signal at time slot $n$ to the transmitted signal as}
   \begin{align}
 	y[n]= f({\bf x}_L[n]) +z[n], \label{eq:nonlinear_sys}
 \end{align}
 where $z[n]$ is IID complex Gaussian  {noise} with zero mean and variance $\sigma^2$.  {It is assumed that} this nonlinear function $f(\cdot)$ is unknown to the transceiver and  {that it may change over time.} 
 
 \begin{figure}
	\centering
	\includegraphics[width=0.35\textwidth]{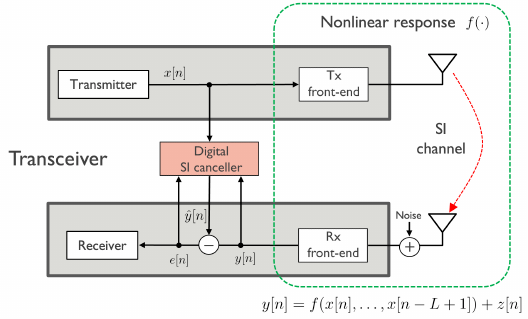}
    \caption{A  {full-duplex} transceiver that  {employs digital SIC to cancel} both linear and nonlinear components of SI.} \label{fig:system_model}
\end{figure}

\subsection{Model-Based Function Approximation}

The main task of digital  {SIC} is to find the best approximation of the nonlinear function $f(\cdot)$ in an online manner---a problem akin to those appearing in contexts of adaptive filter theory as system identification \cite{Haykin_Book2008}. 
 {To} develop a model for this nonlinear function approximation,  {we harness established models for} nonlinear PAs and linear filters. 
    
We begin by considering Saleh's PA model, a time-honored wideband PA model with memory effects initially introduced in \cite{saleh1981frequency,al2020simplified}. 
 {Under such a model, the output of a PA with memory} $M$,  {whose input is} $x[n]$,  {is modeled as} 
    \begin{align}
     x_{\sf PA}[n]= \sum_{m=0}^{M-1} h^{\sf PA}[m] \frac{\gamma x[n-m]}{1+\beta |x[n-m]|^2}, \label{eq:Saleh_model}
\end{align}
where $\beta \geq 0$ and $\gamma\geq 0$ capture the transition sharpness and small signal gain of the PA, respectively. 
Here, $h^{\sf PA}[m]$ is the $m$th coefficient of the PA impulse response. 
We assume that $\beta, \gamma,$ and $\{h^{\sf PA}[m]\}$ are initially unknown to the system  {and may vary with time (e.g., due to temperature)}. 

Comparatively, Saleh's model introduces more significant AM/PM distortion
than most  {off-the-shelf} solid-state PAs \cite{saleh1981frequency,rapp:1991effects,RF_SSPA}. 
 {Consequently, employing this model will provide flexibility and robustness in capturing (potentially other) sources of nonlinearity beyond solely AM/AM distortion.}

Using the truncated power series expansion with maximum  degree $P$ (odd integer), we obtain an approximation of the nonlinear PA transfer function \eqref{eq:Saleh_model} in a form of the linear combination of $\frac{P+1}{2}$ HPs 
 $|x[n]|^{2p}x[n] $ for $p \in \left\{0,1,\ldots, \frac{P-1}{2} \right\}$:
\begin{align}
	\frac{\gamma x[n]}{1+\beta |x[n]|^2} = \sum_{p=0}^{\frac{P-1}{2}}(-1)^p \gamma \beta^p |x[n]|^{2p}x[n] +\mathcal{O}\left((x[n])^{P+1}\right). \label{eq:Saleh_model_approx}
\end{align}
Without loss of generality, we generalize the $p$th order nonlinear basis function as a linear combination of the HPs with maximum degree $p$ (odd integer), namely
 \begin{align}
	\phi_p(x[n]; {\bf c}_p) = \sum_{k=0}^{p-1} c_{p,k}  |x[n]|^{2k}x[n],   \label{eq:GIH}
\end{align}
where $c_{p,k}\in \mathbb{C}$ is the $k$th coefficient of the $p$th basis function and ${\bf c}_p=[c_{p,0}, \ldots, c_{p,p-1}]$. 
This overparameterized representation of the basis function provides additional degrees of freedom to capture PA nonlinearity. 
For instance, when $c_{p,k}=0$ for $k\in \left\{0,1,\ldots, p-2\right\}$ and $c_{p,p-1}=1$, we obtain the standard  {HPs} as the basis functions, i.e., $\phi_p^{\sf HP}(x[n]) = |x[n]|^{2(p-1)}x[n]$.   
By tuning $c_{p,k}$,  different classes of HPs can be generated as basis functions. 
Ignoring the higher-order approximation error in \eqref{eq:Saleh_model_approx}, we can express the PA function in \eqref{eq:Saleh_model_approx} by the sum of the $\frac{P+1}{2}$ overparameterized nonlinear basis functions $\phi_p(x[n])$ as
\begin{align}
	\frac{\gamma x[n]}{1+\beta |x[n]|^2} = \sum_{p=1 }^{\frac{P+1}{2}} \phi_p(x[n]; {\bf c}_p) {.} \label{eq:Saleh_model_approx2}
\end{align}
Plugging \eqref{eq:Saleh_model_approx2} into \eqref{eq:Saleh_model}, we can rewrite the PA output using the parallel HP with memory length $M$ as
 \begin{align}
    x_{\sf PA}[n]&=  \sum_{m=0}^{M-1} h^{\sf PA}[m] \left( \sum_{p=1}^{\frac{P+1}{2}} \phi_p(x[n-m]; {\bf c}_p) \right)  \\
    &= \sum_{p=1 }^{\frac{P+1}{2}} \sum_{m=0}^{M-1} h_{p}^{\sf PA}[m] \phi_p(x[n-m]; {\bf c}_p),\label{eq:PA_out}
\end{align}
where $h_{p}^{\sf PA}[m]$ is the impulse response of the PA for the $p$th order nonlinear input $\phi_p(x[n])$.  
The latter equality comes from  {our} overparameterization  {approach, which will allow us to} capture the input-dependent PA memory response effects $h_{p}^{\sf PA}[m]$ for $p\in \{1, 2, \dots, \frac{P+1}{2}\}$.

Let $\{h_{\sf SI}[q]\}$ be the length-$Q$ impulse response of the SI  {propagation} channel. 
Under  {an idealized} LNA and ADC\footnote{PAs are often assumed the dominant sources of nonlinear SI~\cite{korpi2014full_system_calculation,Korpi2014widely}; our model readily accommodates other sources of nonlinearity, nonetheless.}, the received baseband SI signal is approximately
\begin{align}
	{\hat y}[n]=  \sum_{q =0}^{Q-1} h_{\sf SI}[q] x_{\sf PA}[n-q]. \label{eq:received}
\end{align} 
Invoking \eqref{eq:PA_out} into \eqref{eq:received}, this approximated SI is equivalently
\begin{align}
	{\hat y}[n] &=  \sum_{p=1 }^{\frac{P+1}{2}} \sum_{\ell=0}^{L-1} h_{p}[\ell] \phi_p(x[n-\ell]; {\bf c}_p) \nonumber\\
	&=\underbrace{\sum_{p=1 }^{\frac{P+1}{2}}   \sum_{\ell=0}^{L-1} h_{p}[\ell]  \left(\sum_{k=0}^{p-1} c_{p,k}  |x[n-\ell]|^{2k}x[n-\ell]\right)}_{={\hat f}({\bf x}_L[n] ;\Theta_1,\Theta_2)}, \label{eq:received2}
\end{align} 	
where $\{h_{p}[\ell]\}$ is the FIR filter containing the combined effects of  $h_{p}^{\sf PA}[m]$ and $ h_{\sf SI}[q]$, with  {total memory} $L=M+Q-1$. 
Our approximation of the SI signal in \eqref{eq:received2} contains two  {sets of unknown parameters}: 
\begin{enumerate}
    \item The coefficients for constructing the nonlinear basis functions $\Theta_1=\left\{{\bf c}_{1}, \ldots, {\bf c}_{\frac{P+1}{2}}\right\}$, with $|\Theta_1|=\frac{(P+1)(P+3)}{8}$.
    \item The coefficients of the effective FIR filter,  {$\Theta_2=\{ 
    \{h_{1}[\ell]\},\{h_{2}[\ell]\},\dots,\{h_{\frac{P+1}{2}}[\ell]\}\}$}, with $|\Theta_2|=\frac{(P+1)L}{2}$. 
\end{enumerate}
We denote our approximation of the  {effective nonlinear SI channel} in \eqref{eq:nonlinear_sys} by ${\hat f}(\cdot ;\Theta_1,\Theta_2):\mathbb{C}^L\rightarrow \mathbb{C}$ with parameters $\Theta_1$ and $\Theta_2$.

We  {emphasize} that this model-based function approximation method differs from the  {model-free} function approximation techniques with deep neural networks (DNNs) \cite{Alexios2018neural,Kong2022Neural},  {which} approximate SI without exploiting any  {prior} model knowledge.  
 {Our} model-based method,  {on the other hand, leverages} select parameters  based on knowledge of established PA models and the FIR filters of linear time-invariant systems. 
Ultimately, this proves both numerically and experimentally to make our proposed approach more adaptive and effective in canceling SI, as we will see in Sections VI and VII.

\subsection{Problem Statement}

Let $e[n]$ be the error between the  {true} received SI $y[n]$ and  {the SI} constructed by our approximation with input ${\bf x}_L[n]$:
\begin{align} 
	e[n]& = y[n] - {\hat y}[n] \nonumber\\
	        &= { f}({\bf x}_L[n]) - {\hat f}({\bf x}_L[n] ;\Theta_1,\Theta_2) +w[n].
\end{align}
 {With this function approximation in hand,} our goal is to find  {the parameter sets $\Theta_1$ and $\Theta_2$ which minimize the empirical squared error}
\begin{align}
	J(\Theta_1,\Theta_2) = \sum_{n \in \mathcal{T}}|e[n]|^2,
\end{align}
where $\mathcal{T}=\{n_1,n_1+1,\ldots, n_2\}\subset \mathbb{Z}$ is an index set capturing the time interval of interest. 
In general, finding a jointly optimal solution of $\Theta_1$ and $\Theta_2$ is very challenging due to the non-convexity of  {$J(\Theta_1,\Theta_2)$} with respect to both parameter sets.  
 {In light of this difficulty, we instead optimize  {them} in a disjoint manner using classical adaptive filter theory.}

Let ${\bm \phi}_p({\bf x}_L[n];{\bf c}_p) =\left[\phi_p(x[n];{\bf c}_p), \phi_p(x[n-1]; {\bf c}_p), \ldots,  \right.$ $\left. \phi_p(x[n-L+1]; {\bf c}_p) \right]$ be the filter input vector generated by the $p$th order basis function and ${\bf h}_p[n]=\left[h_p[n], h_p[n-1],\ldots, h_p[n-L+1] \right] $ be the filter response vector of the $p$th order basis function. By concatenating the input and filter response vectors, we can also define 
\begin{align}
{\bm \phi}({\bf x}_L[n]; \Theta_1) =&\left[{\bm \phi}_1({\bf x}_L[n];{\bf c}_p),\ldots, \right. \nonumber \\ &\left. {\bm \phi}_{\frac{P+1}{2}}\left({\bf x}_L[n];{\bf c}_{\frac{P+1}{2}}\right) \right]^{\sf H} \in \mathbb{C}^{\frac{L(P+1)}{2} \times 1}
\end{align}
and ${\bf h}[n]=\left[{\bf h}_1[n],\ldots, {\bf h}_{\frac{P+1}{2}}[n] \right]^{\top} \in \mathbb{C}^{\frac{L(P+1)}{2} \times 1}$.  Then, for given $\Theta_1=\left\{{\bf c}_1,\ldots, {\bf c}_{\frac{P+1}{2}}\right\}$, the approximation of SI in \eqref{eq:received2} can be rewritten in a linear form with respect to ${\bf h}[n]$ as
\begin{align}
	{\hat y}[n] 		&={\hat f}({\bf x}_L[n] ;\Theta_1,\Theta_2) \nonumber\\
	&=  {\bm \phi}({\bf x}_L[n]; \Theta_1)^{\sf H}{\bf h}[n].
\end{align}
 {In the next section, we delineate the methodology for optimizing the basis parameter $\Theta_1$ of $\boldsymbol{\phi}$, leveraging the characteristics of the transmitted signals $\{\mathbf{x}[n]\}$.}
\section{Orthonormal Polynomial Construction}

 {In this section,} we first introduce a method for selecting coefficients that satisfy the orthonormal condition. 
We then propose a low-complexity algorithm that finds such coefficients using Schur complement inversion. 


\subsection{Orthonormal Polynomial Construction using Moments}

Using the moments of  {the transmit} signal $x[n]$, our goal is to construct a set of orthonormal basis functions $\phi_1(x[n];\mathbf{c}_1), \phi_2(x[n];\mathbf{c}_2),\ldots, \phi_{\frac{P+1}{2}}(x[n];\mathbf{c}_{\frac{P+1}{2}})$ that  {satisfy} the following conditions:
\begin{align}\label{eq:orthogonal_condition}
    \begin{cases}
    \mathbb{E}[\phi^*_i(x[n];\mathbf{c}_i)\phi_j(x[n];\mathbf{c}_j)] = 1, & \mbox{if } i=j, \\
    \mathbb{E}[\phi^*_i(x[n];\mathbf{c}_i)\phi_j(x[n];\mathbf{c}_j)] = 0, & \mbox{if } i\neq j.
    \end{cases}
\end{align}
Before  {proceeding}, we first define a moment vector $\boldsymbol{\mu}_{2a}^{2b}$ and a Hankel matrix $\mathbf{M}_p$ as
\begin{align}
    \pmb{\mu}_{2a}^{2b} \triangleq \left[\mu_{2a},\mu_{2(a+1)},\dots,\mu_{2b}   \right]^\top,
\end{align}
and
\begin{align} \label{eq:Hankel_mtx}
    \mathbf{M}_p \triangleq  
    \begin{bmatrix} 
        \mu_{2} & \mu_{4} & \dots & \mu_{2p-2} \\
        \mu_{4} & \mu_{6} & \dots & \mu_{2p} \\
        \vdots & \vdots &  \ddots & \vdots \\
        \mu_{2p-2} & \mu_{2p} & \dots & \mu_{4p-6}
    \end{bmatrix},
\end{align}
where $a \leq b$, $p> 1$, and $\mu_{2p} = \mathbb{E}[|x[n]|^{2p}]$. 
Now, we provide a theorem for finding  {polynomial coefficeints} which satisfy the orthonormal condition \eqref{eq:orthogonal_condition}, given known moments $\pmb{\mu}_{2a}^{2b}$.
\begin{thm} \label{thm:orthogonal_coefficient}
     {Given the even moments} $\{ \mu_2, \mu_4, \dots, \mu_{4p-2} \}$,  {the} basis functions $ \{\phi_1(x[n];\mathbf{c}_1), \dots,$ $ \phi_{p} (x[n]; \mathbf{c}_{p}) \}$ are orthonormal provided that
   \begin{align}
       \hat{\mathbf{c}}_p = \frac{1}{z}[\bar{\mathbf{c}}_p^\top, 1]^\top,
   \end{align}
    where 
    \begin{align}\label{eq:equation_solution}
     \bar{\mathbf{c}}_p = -\mathbf{M}_p^{-1} \pmb{\mu}_{2p}^{4p-4},
 \end{align}
 and $z$ is a normalization factor given by
  \begin{align} \label{eq:expect_normal_factor}
    z = \sqrt{\mathbb{E}[|\phi_p(x[n];\mathbf{c}_p)|^2]}.    
 \end{align}
\end{thm}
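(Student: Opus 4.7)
The plan is to establish Theorem~\ref{thm:orthogonal_coefficient} by direct algebraic verification, proceeding inductively on $p$. First, I would expand the inner product $\mathbb{E}[\phi_i^*(x[n]; \mathbf{c}_i)\, \phi_j(x[n]; \mathbf{c}_j)] = \sum_{k=0}^{i-1}\sum_{l=0}^{j-1} c_{i,k}^*\, c_{j,l}\, \mu_{2(k+l+1)}$ by using the form of the basis functions in \eqref{eq:GIH} together with the identity $x^*[n]\,x[n] = |x[n]|^2$. This recasts the orthonormality conditions in \eqref{eq:orthogonal_condition} as purely algebraic identities in the coefficient vectors and the even-moment sequence $\{\mu_{2m}\}$, and eliminates any stochastic content.

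Proceeding inductively, I would assume that $\phi_1, \dots, \phi_{p-1}$ have already been produced by the same formula (so they are orthonormal and each has a nonzero leading coefficient multiplying $|x[n]|^{2(j-1)}x[n]$). The key structural observation is that this triangular leading-coefficient pattern makes $\{\phi_j\}_{j=1}^{p-1}$ a basis for the same $(p-1)$-dimensional subspace spanned by the monomials $\{|x[n]|^{2l}x[n]\}_{l=0}^{p-2}$. Consequently, enforcing $\mathbb{E}[\phi_p^* \phi_j] = 0$ for every $j<p$ is equivalent to enforcing $\mathbb{E}[\phi_p^*(x[n];\mathbf{c}_p)\, |x[n]|^{2l}x[n]] = 0$ for every $l = 0, 1, \dots, p-2$, so I only need to impose orthogonality against monomials.

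These $p-1$ monomial-orthogonality conditions expand to the linear system $\sum_{k=0}^{p-1} c_{p,k}^*\, \mu_{2(k+l+1)} = 0$ for $l = 0, \dots, p-2$. Because all even moments $\mu_{2m}$ are real, this is equivalent to the same system in $c_{p,k}$ itself, whose coefficient matrix has entry $\mu_{2(k+l+1)}$ in row $l$ and column $k$. Fixing the convention $c_{p,p-1} = 1$ (prior to unit-norm scaling) peels off the final column of that matrix as the vector $\pmb{\mu}_{2p}^{4p-4}$ and leaves the leading $(p-1)\times(p-1)$ Hankel block $\mathbf{M}_p$ of \eqref{eq:Hankel_mtx} multiplying $\bar{\mathbf{c}}_p$; inversion then yields \eqref{eq:equation_solution}. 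Finally, dividing the resulting $\mathbf{c}_p$ by $z$ of \eqref{eq:expect_normal_factor} enforces $\mathbb{E}[|\phi_p|^2]=1$ without disturbing the orthogonality already established (since rescaling one basis function preserves all inner products that were zero).

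The main obstacle in turning this sketch into a fully rigorous proof is justifying the invertibility of $\mathbf{M}_p$. I would handle this by observing that $\mathbf{M}_p$ is, up to reindexing, a classical Hamburger moment matrix associated with the distribution of $|x[n]|^2$ and is therefore positive semidefinite; it is strictly positive definite---and hence invertible---whenever the support of $|x[n]|$ contains at least $p$ distinct points, a condition that holds for every signal distribution treated later in the paper. A secondary bookkeeping concern is tracking complex conjugates, but this is benign: the realness of all even moments forces the solution $\bar{\mathbf{c}}_p$ in \eqref{eq:equation_solution} to be real, so the formula $\hat{\mathbf{c}}_p = \frac{1}{z}[\bar{\mathbf{c}}_p^\top, 1]^\top$ is internally consistent and the conjugates drop out of the orthogonality equations.
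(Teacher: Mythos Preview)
Your proposal is correct and follows essentially the same route as the paper: an inductive step in which the monic constraint $c_{p,p-1}=1$ is fixed, the $(p-1)$ orthogonality conditions are written as a linear system, and the Hankel matrix $\mathbf{M}_p$ is inverted to yield \eqref{eq:equation_solution}, followed by normalization. Your reduction ``orthogonality against each $\phi_j$ is equivalent to orthogonality against each monomial $|x|^{2l}x$'' is exactly the paper's step of factoring out and cancelling the invertible lower-triangular coefficient matrix $\mathbf{C}_p$; you additionally supply a justification for the invertibility of $\mathbf{M}_p$ (via positive definiteness of the moment matrix of $|x|^2$) that the paper simply assumes.
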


\begin{proof}
See Appendix 1.
\end{proof}


\begin{figure*}[h]
    \centering
    \includegraphics[scale=0.6]{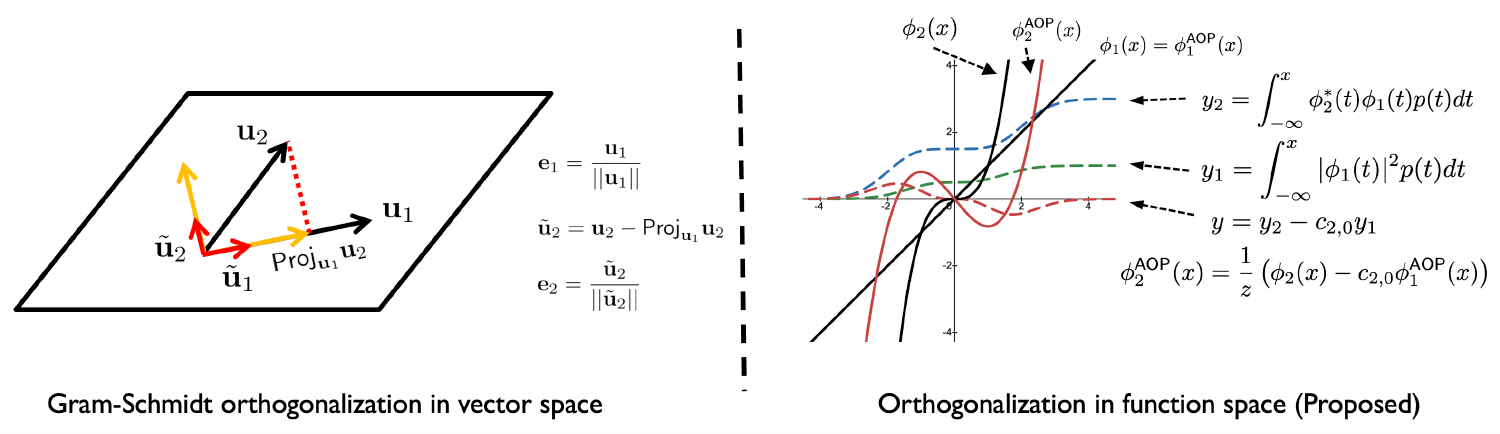}
    \caption{An illustration of the proposed  {orthonormal} polynomial construction method. Similar to  {the} Gram-Schmidt  {orthogonalization} process in vector space, our proposed algorithm sequentially constructs orthonormal polynomials in function space.} 
    \label{fig:gram_schmidt_comparison}
\end{figure*}
From Theorem \ref{thm:orthogonal_coefficient}, we can construct the coefficients of a polynomial satisfying \eqref{eq:orthogonal_condition} when matrix $\mathbf{M}_p$ is nonsingular.  {The process of} sequentially finding the coefficients of orthogonal polynomials using the matrix equation is similar to the orthogonalization algorithm  in  {traditional} linear algebra. 
Extending the  {notion of a} projection in a vector space, this method orthonormalizes in a \textit{function} space, where the inner product is defined as in Fig. \ref{fig:gram_schmidt_comparison}.

The advantage of this method is that it can find the orthonormal polynomials  {with only knowledge of the moments.} 
 {However, one drawback arises from the need  to perform several successive matrix inversions for higher-order orthonormal polynomials.} 
In the following subsection,  {we propose an equivalent yet more efficient method which reduces the complexity of  {such} successive matrix inversions.}

 \begin{algorithm}[h]
 \caption{Adpative orthonormal polynomial construction}\label{alg:orthonormal_polynomial}
 \begin{algorithmic}[1]
  \State \textbf{Input:} $\pmb{\mu}_{2}^{2P}$
  \State \textbf{Output: } $\phi^{\sf AOP}_p(x;\hat{\mathbf{c}}_p),~p=1,\dots,\frac{P+1}{2}$
  \State \textbf{Initialize: }$\mathbf{M}^{-1}_2 \leftarrow \left[\frac{1}\mu_2\right]$,~  $\mathbf{c}_1 \leftarrow [\frac{1}{\sqrt{\mu_2}}] $  
 \For{$p=2$ to $\frac{P+1}{2}$}
         \State $\bar{\mathbf{c}}_{p} \leftarrow -\mathbf{M}_{p}^{-1}\pmb{\mu}_{2p}^{4p-4}$, $\mathbf{c}_{p} \leftarrow [\bar{\mathbf{c}}_{p}, 1] $ \label{state:c}
      \State $\tilde{\mathbf{c}}_{p} \leftarrow \mathbf{c}_{p}*\mathbf{c}_{p}$ \label{State:conv}
      \State $\hat{\mathbf{c}}_{p} \leftarrow \mathbf{c}_{p} / \sqrt{ \sum_{k=1}^{2p-1} \tilde{c}_{p,k} \mu_{2k}} $ \label{State:normal_coeff}
   \State $\mathbf{u}_{p} \leftarrow \pmb{\mu}_{2p}^{4p-4}$ \label{state:U} 
      \State $\tilde{\mathbf{u}}_p \leftarrow \mathbf{M}_p^{-1}\mathbf{u}_{p}$
    \State  $s_{p} \leftarrow \mu_{4p-2}- \mathbf{u}_p^\top\tilde{\mathbf{u}}_p$\label{state:s} 
           
    \State $\mathbf{M}_{p+1}^{-1} \leftarrow  \begin{bmatrix}
      \mathbf{M}_{p}^{-1}+\tilde{\mathbf{u}}_p s_{p}^{-1}\tilde{\mathbf{u}}_p^\top & -\tilde{\mathbf{u}}_p s_{p}^{-1} \\ -s^{-1}_{p}\tilde{\mathbf{u}}_p^\top & s^{-1}_{p}
     \end{bmatrix}$ \label{state:mtx_inv}
 \EndFor
   \end{algorithmic}
\end{algorithm}

\subsection{ {Efficiently} Constructing Orthonormal Polynomials}



 {In light of the computational costs of successive matrix inversions, we now propose an} orthonormal polynomial construction method  {involving} two steps: i)  recursive computation of the orthogonal basis functions and ii) basis function normalization.

{\bf Recursive Computation:}
From Theorem \ref{thm:orthogonal_coefficient}, we construct the matrix equation satisfying the orthogonal condition in \eqref{eq:orthogonal_condition}.
To obtain the  {basis function coefficients}, we need to solve the matrix equations in \eqref{eq:equation_solution} for each $p\in\{1,2,\ldots, \frac{P+1}{2}\}$. 
The computational complexity for this operation scales considerably as the maximum nonlinear order $\frac{P+1}{2}$ increases. 
To reduce the computational complexity in constructing the orthonormal basis functions, we present a recursive algorithm using the structure of the Hankel matrix ${\bf M}_p$. 
 {In essence, we will use $\mathbf{M}_p$ to} construct ${\bf M}_{p+1}$. 
Let $\mathbf{u}_{p}\triangleq\pmb{\mu}_{2p}^{4p-4}$ 
and define a scalar $ r_{p} \triangleq \mu_{4p-2}$. 
Then, $\mathbf{M}_{p+1}$  {can be} constructed as
\begin{align}
     \mathbf{M}_{p+1} =
     \begin{bmatrix}
      \mathbf{M}_{p} & \mathbf{u}_{p} \\ \mathbf{u}_{p}^\top & r_{p}
     \end{bmatrix}.
\end{align}
Using the fact that ${\bf M}_p$ is the Schur complement \cite{golub2013matrix} of $\mathbf{M}_{p+1}$, the inverse of $\mathbf{M}_{p+1}$ can be recursively computed using the inverse of  $\mathbf{M}_{p}$ as
 \begin{align}\label{eq:SchurComplement}
     \mathbf{M}_{p+1}^{-1} &=
     \begin{bmatrix}
       \mathbf{M}_{p}^{-1}+\mathbf{M}_{p}^{-1}\mathbf{u}_{p} s_{p}^{-1}\mathbf{u}_{p}^{\top}\mathbf{M}_{p}^{-1} & -\mathbf{M}_{p}^{-1}\mathbf{u}_{p} s_{p}^{-1} \\ -s^{-1}_{p}\mathbf{u}_{p}^{\top}\mathbf{M}_{p}^{-1} & s^{-1}_{p}
     \end{bmatrix} \nonumber \\
     &=     \begin{bmatrix}
      \mathbf{M}_{p}^{-1}+\tilde{\mathbf{u}}_p s_{p}^{-1}\tilde{\mathbf{u}}_p^\top & -\tilde{\mathbf{u}}_p s_{p}^{-1} \\ -s^{-1}_{p}\tilde{\mathbf{u}}_p^\top & s^{-1}_{p}
     \end{bmatrix},
 \end{align}
 where $\tilde{\mathbf{u}}= \mathbf{M}_p^{-1}\mathbf{u}_p$ and $s_{p} = r_{p}-\mathbf{u}_{p}^{\top}\tilde{\mathbf{u}}_p$. By harnessing the structure of the Hankel matrix ${\bf M}_p$, we can compute a set of the orthogonal basis functions in  {this} computationally efficient manner. This recursive computation  {of $\{\mathbf{M}_p^{-1}\}$}  is summarized in Algorithm \ref{alg:orthonormal_polynomial}.

 {\bf Normalization:} Once the orthogonal basis functions are computed by  {the proposed} recursive method, 
 it is necessary to normalize the polynomial basis function to have unit norm. With this normalization, we obtain the orthonormal basis functions $\phi^{\sf AOP}_{p}(x;\hat{\mathbf{c}}_{p})$  as
\begin{align}
    \phi^{\sf AOP}_{p}(x;\hat{\mathbf{c}}_{p}) = \frac{\phi_{p}(x;\mathbf{c}_{p})}{z},
\end{align}
where $\hat{\mathbf{c}}_{p}$ is an orthonormal coefficient  {vector} of the $p$th order polynomial.
The normalization coefficient $z$ in \eqref{eq:expect_normal_factor} is computed by
\begin{align}\label{eq:normalization_factor}
     z = \sqrt{\mathbb{E}[|\phi_{p}(x;\mathbf{c}_{p})|^2]} =\sum_{k=1}^{2p-1} \tilde{c}_{p,k} \mu_{2k},
\end{align}
where  {$\tilde{c}_{p,k} = \sum_{i} [\mathbf{c}_{p}]_i[\mathbf{c}_{p}]_{k-i}$ is the self-convolution of $\mathbf{c}_p$.}

 {Calculating coefficients for the $p$th orthonormal polynomial using conventional techniques such as Gaussian elimination involves a computational complexity of $\mathcal{O}(\frac{2}{3}p^3)$. 
 {Employing our proposed method reduces this complexity to $\mathcal{O}(3p^2)$.}}

\section{Orthonormal Polynomials for\\ {Various} Signal Distributions}

Using the algorithm outlined in the previous section, this section presents  {orthonormal} polynomials for several representative signal distributions widely used  {in communication systems.} 
Note that some of the distributions considered herein have orthonormal polynomials which have also been investigated in other contexts in prior work. 
We first summarize the moments and  {orthonormal} polynomial coefficients of signals following continuous probability distributions such as Gaussian, uniform, and exponential.
Then, we present  {orthonormal} polynomial coefficients for signals following discrete probability distributions  {commonly used} in  {today's} wireless  {systems}.

\subsection{Orthonormal Polynomials for Complex Gaussian Signals}

 {Let $X$ be a complex Gaussian distributed random variable with mean zero and variance $\sigma_x^2$, denoted as $\mathcal{CN}(0,\sigma_x^2)$, 
which is perhaps the most widely used distribution in communication theory. 
Recall, the only information necessitated by our algorithm to construct an orthonormal polynomial are the moments of the distribution.}
The $m$th-order moment, $\mathbb{E}[|X|^m]$, has a closed-form expression for even $m$ \cite{fassino2019computing:Gaussian_moment},  {given by}
\begin{align}\label{eq:Gaussian_moment}
    \mathbb{E}[|X|^m] = \sigma_x^m \left(\frac{m}{2}\right)!. 
 \end{align}
  For the particular case of $\sigma_x^2 = 1$, the first three  {orthonormal} polynomials obtained by  {our proposed} algorithm are  {then}
   \begin{align}
          \phi_1(X) &= X, \nonumber \\ 
          \phi_2(X) &= \frac{1}{\sqrt{2}}(|X|^2X-2X), \nonumber \\
          \phi_3(X) &= \frac{1}{\sqrt{12}}(|X|^4X-6|X|^2X+6X). 
     \end{align}
 {We point out that the orthonormal polynomials of complex Gaussian random variables have been studied  {in prior literature under the moniker of} \ITO-Hermite polynomials \cite{ito1952complex}.} 

\subsection{Orthonormal Polynomials for Uniform Signals}
 {The random variable $X$ uniformly distributed over the interval $[-k,k]$, where $k>0$}, has a probability density function (PDF) given by
\begin{align}
f_X(x) = 
\begin{cases} 
\frac{1}{2k} & \text{if } -k \leq x \leq k \\
0 & \text{otherwise}.
\end{cases}
\end{align}
It can be shown  {straighforwardly} that $\mathbb{E}[X^{2m}]=\frac{1}{2m+1}k^{2m}$. 
 {Since} the matrix $\mathbf{M}_p$ in \eqref{eq:Hankel_mtx} is non-singular  {when populated with} moments of the uniform distribution,  {the corresponding polynomial coefficients can be obtained directly from Algorithm~\ref{alg:orthonormal_polynomial}}. 
For the particular case of $k = 1$, the first three  {orthonormal} polynomials obtained by  {Algorithm~\ref{alg:orthonormal_polynomial}} are:
   \begin{align}\label{eq:uniform_op}
          \phi_1(X) &= \sqrt{3}X, \nonumber \\ 
          \phi_2(X) &= \sqrt{\frac{7}{4}}(5|X|^2X-3X), \nonumber \\
          \phi_3(X) &= \sqrt{\frac{11}{64}}(63|X|^4X-70|X|^2X+15X).
     \end{align}
Note that it is  {well} known that the  {orthonormal} polynomials of the uniform distribution are the Legendre polynomials \cite{weisstein2002legendre}, with \eqref{eq:uniform_op} being a scaled version of such. 

\subsection{Orthonormal Polynomials for Exponential Signals}
 {Suppose} $X$  {is an exponential random variable} parameterized by $\lambda>0$,  {having} PDF
\begin{align}
    f_X(x) = \lambda \exp(-\lambda x),  {\quad x > 0}.
\end{align}
The moment generating function (MGF) is  {well} known  {to be} $M_X(t) = \frac{\lambda}{\lambda-t}$. 
Moments of this distribution  {can be} derived from the MGF as $\mathbb{E}[X^m] = \frac{\partial^m M_X(t)}{\partial t^m}|_{t=0}$ . 

 {Note that distributions with non-zero mean, such as this, may yield multiple valid orthonormal polynomials, depending on if one considers only even degrees or both even and odd degrees. In the former case, the first polynomial is denoted as $\phi_1(x) = x$, whereas in the latter case, the degree of the first polynomial is $0$, leading it to be denoted as $\Phi_0 = c_{0,0}$, where $\Phi_p$ is extended basis function defined as}
\begin{align}
    \Phi_p(X;\mathbf{c}^p) = \sum_{k=0}^{p} c_{p,k}|X|^{k}X.
\end{align}
The proposed algorithm can be directly extended to construct this expanded basis, $\Phi_p(\cdot)$,  {by including  both the odd and even moments when generating the Hankel matrix \eqref{eq:Hankel_mtx}}.
For the exponential distribution where $\lambda=1$, the orthonormal polynomials for the basis $\phi_p$ (having only even degrees) are
\begin{align}\label{eq:exponential_op}
 \phi_1(X) &= \frac{1}{\sqrt{2}} X,  \nonumber \\ 
 \phi_2(X) &= \frac{1}{\sqrt{432}}(|X|^2X-12X), \nonumber  \\ 
 \phi_3(X) &= \frac{1}{\sqrt{654}}\left(\frac{|X|^4X}{40}-\frac{11}{6}|X|^2X + 13X\right),
\end{align}
and $\Phi_p$ (having both even and odd degrees) are
\begin{align}\label{eq:extend_exponential_op}
 \Phi_0(X) &= 1, \nonumber \\ 
 \Phi_1(X) &= X-1, \nonumber  \\ 
 \Phi_2(X) &= \frac{1}{2}\left(|X|X-4X+2\right), \nonumber \\
 \Phi_3(X) &= \frac{1}{6}\left(|X|^2X-9|X|X+18X-6\right).
\end{align}
This example illustrates that, even with the same distribution, more than one valid orthonormal polynomial can exist, depending on the form of the basis function. Note that the latter polynomial set  {in \eqref{eq:extend_exponential_op}} is well known as the Laguerre polynomial \cite{koekoek1991differential_Laguerre}.

\subsection{Orthonormal Polynomials for QAM Signals}

 {Other particularly relevant signal distributions to consider are those of} 4QAM, 16QAM, 64QAM, and 256QAM, corresponding to digital modulation techniques used widely in modern communication systems.
Let the  {constellation (set) of} possible modulation symbols be defined as $\mathcal{S} = \{s_1,s_2,\dots ,s_i\}$, where $i \in \{4, 16,64, 256\}$ is the modulation order. 
When symbols  {are drawn} uniformly from a given  {constellation}, the $m$th moment  {has the closed-form}
\begin{align}
    \mu_m = \frac{\sum_{k=1}^i |s_k|^m}{i}.
\end{align}

 {For the particular case of 4QAM, the Hankel matrix in \eqref{eq:Hankel_mtx} is a rank-$1$ matrix.} 
 {This is perhaps most obvious when the four symbols are on the unit circle, in which case they all have a second-order moment (or power) of $1$ and thus the higher-order moments are also $1$.}
 {Put simply,}  {orthonormal} polynomials of the third-order or higher  {do} not exist for 4QAM.
That is, the higher-order polynomials in \eqref{eq:GIH} are
\begin{align}
    \phi_p(x) = \sum_{k=0}^{p-1}c_{p,k}x.
\end{align}
Thus,  {when transmitting 4QAM symbols}, the nonlinearity is  {captured solely} by the first-order polynomial $x$. 
 {Orthonormal polynomials for higher-order QAM constellations are summarized in Table~\ref{table:OP_mcs}.}
\begin{table*}[h]
\caption{Orthonormal polynomials of  {popular}  {digital modulation schemes}} \label{table:OP_mcs}
\centering
\begin{tabular}{|c|l|l|}
\hline
\begin{tabular}[c]{@{}c@{}}\bf{Modulation}\\\bf{Scheme}\end{tabular} & \bf{Moments} & \bf{Orthonormal Polynomials (rounded coefficients)}\\ \hline
4QAM                                                        &   $\mu_{2k}$ = 1   &    $\phi_1(x) = x  $      \\ \hline
16QAM                                                       &   \begin{tabular}[c]{@{}l@{}}  $\mu_{2k}= \frac{1}{16\cdot 10^k}\left\{ 4 \cdot 18^k+4 \cdot 2^k+ 8\cdot 10^k  \right\}, $ \\  $[\mu_2, \mu_4, \mu_6, \mu_8, \dots ]= [1,1.32,1.96, 3.1248 ,\cdots ] $ \end{tabular}  &          \begin{tabular}[c]{@{}l@{}}     $\phi_1(x) = x $\\ 
 $\phi_2(x) = \frac{1}{\sqrt{0.2176}}(|x|^2x-1.32x) $ \\ 
 $\phi_3(x) = \frac{1}{\sqrt{0.0542}}\left(|x|^4x-2.47|x|^2x + 1.30x\right) $ \end{tabular} \\ \hline 
64QAM                                                       &  \begin{tabular}[c]{@{}l@{}} $\mu_{2k} = \frac{1}{64\cdot 42^k} \left\{ 4\cdot \left( 2^k +18^k+98^k \right) + 12\cdot 50^k  \right.$ \\ 
$\quad \quad +\left. 8\cdot \left( 10^k+26^2 +34^k +58^k+74^k\right )\right\}, $ \\ 
$[\mu_2, \mu_4, \mu_6, \mu_8, \dots ]= [1,1.381,2.2258, 3.9630 ,\cdots ] $\end{tabular}      &         \begin{tabular}[c]{@{}l@{}}     $\phi_1(x) = x $\\ 
 $\phi_2(x) = \frac{1}{\sqrt{0.3188}}(|x|^2x-1.381x) $ \\ 
 $\phi_3(x) = \frac{1}{\sqrt{0.1421}}\left(|x|^4x-2.7898|x|^2x + 1.6268x\right) $ \end{tabular} \\ \hline 
256QAM                                                      &   \begin{tabular}[c]{@{}l@{}} $\mu_{2k} = \frac{1}{256}\sum_{i=1}^{256} |s_i|^{2k} $ \\$[\mu_2, \mu_4, \mu_6, \mu_8, \dots ]= [1,1.3953,2.2922, 4.1910 ,\cdots ] $  \end{tabular}   &                \begin{tabular}[c]{@{}l@{}}     $\phi_1(x) = x $\\ 
 $\phi_2(x) = \frac{1}{\sqrt{0.3453}}(|x|^2x-1.3953x) $ \\ 
 $\phi_3(x) = \frac{1}{\sqrt{0.1772}}\left(|x|^4x-2.8747|x|^2x + 1.7189x\right) $ \end{tabular} \\ \hline 
\end{tabular}
\end{table*}


\section{SIC using Adaptive Orthonormal Polynomials}

In this section, we  {introduce} a LMS-based SIC algorithm using orthonormal polynomials. 
We first propose an algorithm that can be used universally for arbitrary signals, that is,  {when moments are not known a priori and must be computed.}
 {We} then propose  {a more implementation}-friendly algorithm  {which leverages} a look-up table when the moments of signals are known  {a priori},  {such as when employing predefined MCSs, like in 5G and Wi-Fi.}

\subsection{LMS Algorithm using Adaptive Orthonormal Polynomials}

As a first step,  {our} proposed SIC algorithm estimates the moments of the transmit symbols $x[n]$. The $p$th moment of the transmit symbol $x[n]$ is estimated by its sample average:
\begin{align}
    \mu_p = \frac{1}{N}\sum_{n=0}^{N-1} |x[n]|^p.
\end{align}
  {This estimation converges to the true moment as the sample size $N$ increases by the law of large numbers, i.e.,}
\begin{align}
     \lim_{N\rightarrow \infty}\frac{1}{N}\sum_{n=0}^{N-1} |x[n]|^p = \mathbb{E}\left[|x[n]|^p\right].
\end{align}
 {Even for modest $N$}, this sample average estimator can  {often closely approximate the true moment} since  estimation errors reduce linearly with sample size {\cite{Ueda96generalizationError}}. 
Then, the basis functions have orthogonality by  {constructing} the coefficients  {according to} Algorithm~\ref{alg:orthonormal_polynomial} as ${\bm \phi}^{\sf AOP}({\bf x}[n]; \Theta_1)= \left[\phi^{\sf AOP}_{1}(x[n],{\hat{\bf c}}_1),\ldots, \phi^{\sf AOP}_{\frac{P+1}{2}}(x[n],\hat{{\bf c}}_{\frac{P+1}{2}})\right]^{\top}$.
Theorem~\ref{thm:approx-PA} is provided  {below} to show that  {SI} can indeed be canceled using the proposed orthonormal polynomial basis functions.
\begin{thm}\label{thm:approx-PA}
     {A nonlinear PA response function $f(x)$ approximated by a sum of polynomials can be expressed as}
    \begin{align}
        f(x[n]) = \sum_{k=1}^{\frac{P+1}{2}} \pmb{\phi}^{\sf OP}_k(\mathbf{x}[n];\mathbf{c}_k)^{\sf H}\mathbf{h}_k[n],
    \end{align}
    where $\mathbf{h}_k[n]$ is a weight vector and $\pmb{\phi}^{\sf OP}_k(\mathbf{x}[n];\mathbf{c}_k)$ is an orthonormal basis function whose highest order is $2k-1$ and has coefficients $\mathbf{c}_k$.
\end{thm}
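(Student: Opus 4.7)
The plan is to establish Theorem~\ref{thm:approx-PA} by a change-of-basis argument between the Hammerstein polynomials (HPs) used to expand the PA response and the orthonormal polynomials (OPs) constructed in Section~III. The key observation is that any polynomial approximation of $f$ in the HP basis already lies in a finite-dimensional space, and the OPs are obtained from the HPs by a structured linear transformation that is invertible. I would therefore argue existence of the weight vectors $\mathbf{h}_k[n]$ rather than deriving them explicitly.

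First, I would invoke the parallel-HP expansion \eqref{eq:PA_out} derived from Saleh's model (with the higher-order truncation error absorbed into the approximation), which writes
\begin{align*}
f(x[n]) \;=\; \sum_{p=1}^{(P+1)/2}\sum_{\ell=0}^{L-1} h_p^{\sf HP}[\ell]\,|x[n-\ell]|^{2(p-1)}x[n-\ell].
\end{align*}
This establishes that $f(x[n])$ lies in the span of the HP atoms $\{|x[n-\ell]|^{2(p-1)}x[n-\ell]\}$ for $p=1,\dots,(P+1)/2$ and $\ell=0,\dots,L-1$. Next, by the construction in \eqref{eq:GIH} together with Theorem~\ref{thm:orthogonal_coefficient}, each OP atom has the form $\phi_k^{\sf OP}(x;\hat{\mathbf{c}}_k)=\sum_{j=0}^{k-1}\hat{c}_{k,j}\,|x|^{2j}x$. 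Stacking coefficients into the lower-triangular matrix $\mathbf{C}\in\mathbb{C}^{\frac{P+1}{2}\times \frac{P+1}{2}}$ with $[\mathbf{C}]_{k,j+1}=\hat{c}_{k,j}$, the diagonal entries equal the normalized leading coefficients $1/z_k$, which are strictly nonzero since $z_k=\sqrt{\mathbb{E}[|\phi_k(x;\mathbf{c}_k)|^2]}$ is finite and positive whenever the Hankel matrix $\mathbf{M}_k$ in \eqref{eq:Hankel_mtx} is nonsingular. Hence $\mathbf{C}$ is invertible.

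Third, invertibility of $\mathbf{C}$ lets me transform the HP expansion tap-by-tap into an OP expansion: for each $\ell$, defining the OP weights $[h_1[\ell],\dots,h_{(P+1)/2}[\ell]]^{\sf T}=\mathbf{C}^{-{\sf T}}[h_1^{\sf HP}[\ell],\dots,h_{(P+1)/2}^{\sf HP}[\ell]]^{\sf T}$ yields
\begin{align*}
\sum_{p=1}^{(P+1)/2} h_p^{\sf HP}[\ell]\,|x[n-\ell]|^{2(p-1)}x[n-\ell] \;=\; \sum_{k=1}^{(P+1)/2} h_k[\ell]\,\phi_k^{\sf OP}(x[n-\ell];\hat{\mathbf{c}}_k).
\end{align*}
Summing over $\ell$ and regrouping the terms into the per-order filter response vectors $\mathbf{h}_k[n]=[h_k[n],h_k[n-1],\dots,h_k[n-L+1]]^{\top}$ against the per-order basis vectors $\boldsymbol{\phi}_k^{\sf OP}(\mathbf{x}[n];\mathbf{c}_k)$ produces exactly the bilinear form claimed in the statement.

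The main obstacle is ensuring that the change of basis is well defined under all signal distributions of interest; this reduces to the nonsingularity of the Hankel matrices $\{\mathbf{M}_p\}$ up to order $(P+1)/2$. For continuous distributions (Gaussian, uniform, exponential) and for $M$-QAM with $M\ge 16$ this holds by the discussion in Section~IV, but 4QAM provides a degenerate example where $\mathbf{M}_p$ is rank-one and only $\phi_1^{\sf OP}$ exists; I would address this by interpreting the summation as truncated at the largest order for which $\mathbf{M}_p$ is invertible, which coincides with the maximum order at which the HP atoms themselves are linearly independent over the signal support, so no approximation power is lost.
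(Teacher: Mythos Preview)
Your proposal is correct and follows essentially the same change-of-basis argument as the paper's own proof: both express the PA response in the HP basis, observe that the OP basis is related to the HP basis by a lower-triangular coefficient matrix with nonzero diagonal (hence invertible), and then absorb the inverse transformation into the weight vectors $\mathbf{h}_k[n]$. Your treatment is slightly more careful in justifying invertibility via $z_k>0$ and in flagging the degenerate 4QAM case, but the core idea and structure match the paper exactly.
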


\begin{proof}
    See Appendix 2.
\end{proof}

To derive the weights $\{\mathbf{h}_k[n]\}$ which satisfy Theorem~\ref{thm:approx-PA}, we introduce an adaptive method with our proposed orthonormal basis function.
The LMS algorithm is a stochastic approximation of the iterative steepest descent based implementation of the Wiener filter and is applicable when the  {SI} and the input signal are jointly wide-sense stationary (JWSS) \cite{Haykin_Book2008,sayed2011adaptive,widrow1977stationary}.
This stochastic approximation involves a simple update equation that can be implemented in practical systems with low computational complexity. Using the classical LMS algorithm \cite{Haykin_Book2008}, the linear filter parameter coefficients are updated as
\begin{align}
{\bf \hat h}[n+1]= {\bf \hat h}[n] +\mu  {\bm \phi}^{\sf AOP}({\bf x}_L[n]; \Theta_1) e^*[n], \label{eq:LMS}
\end{align}
where $\mu$ is the step size.


{\bf Remark 1 on Computational Complexity:} The computational cost associated with implementing orthonormal polynomials in the AOP-LMS algorithm is $\mathcal{O}\left(\frac{P^3}{8}\right)$, where the majority of this complexity involves constructing the orthonormal basis functions.

\begin{figure}[h]
    \centering
    \includegraphics[scale=0.35]{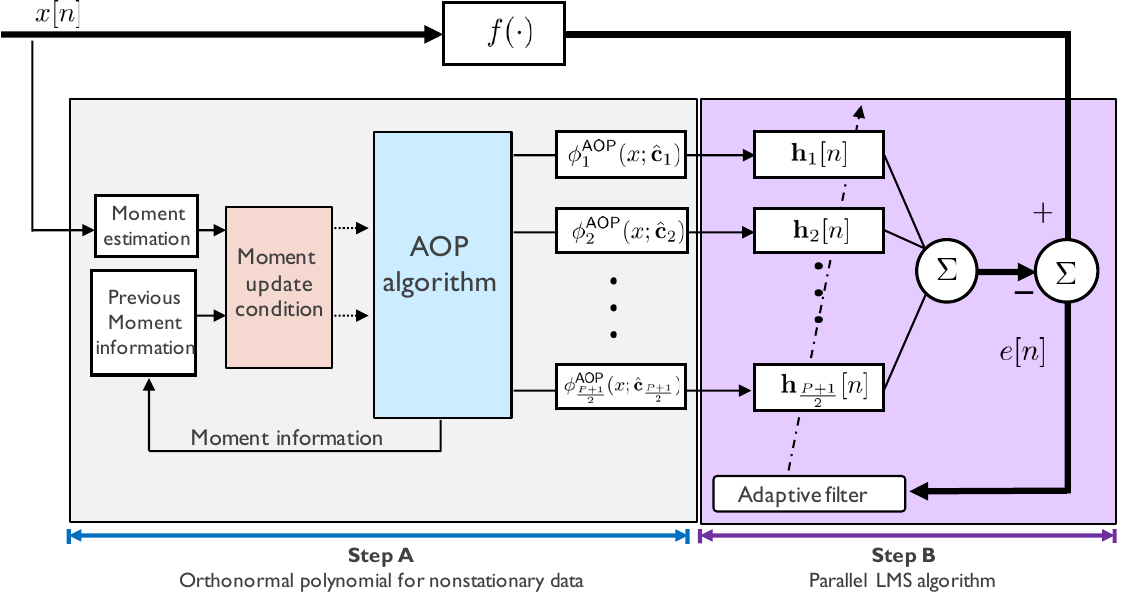}
    \caption{Block diagram of the  {proposed} AOP-LMS algorithm  {for} non-stationary input signals.}
    \label{fig:Non_stationary_AOP_diagram}
\end{figure}

 \begin{figure}[h]
    \centering
    \includegraphics[scale=0.35]{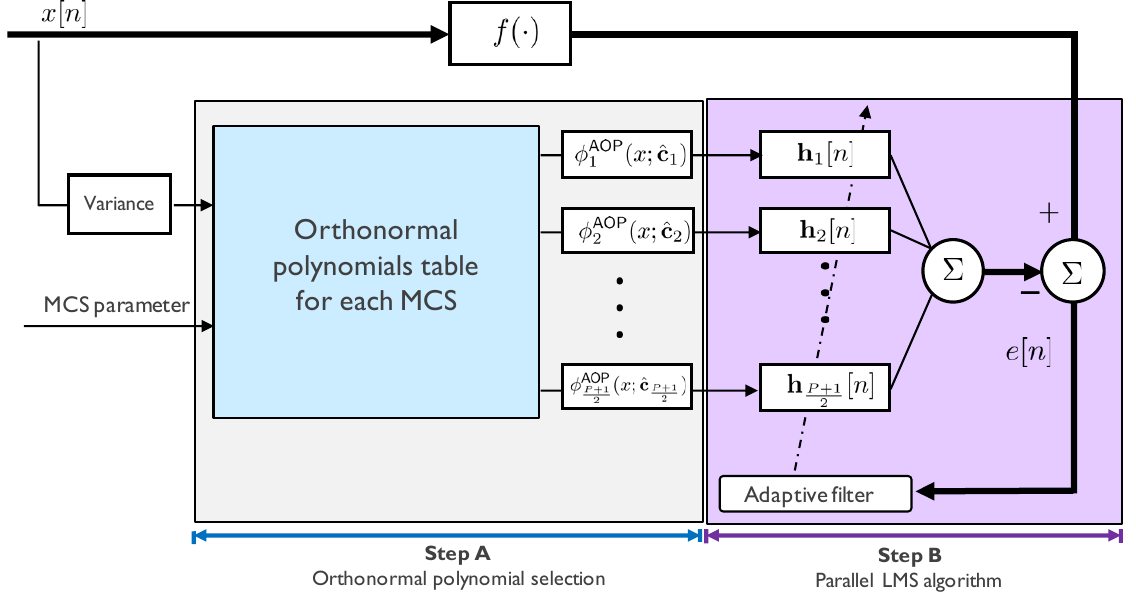}
    \caption{Block diagram of the  {proposed} LMS algorithm using a LUT.  {Pre-computed orthonormal polynomials can be retrieved from the LUT as the MCS changes during link adaptation in order to reduce computational complexity at run-time.}}
    \label{fig:LUT_diagram}
\end{figure}

\subsection{Extending Our Algorithm to Non-Stationary Input Signals}

\textbf{Filter Adaptation:} 
Based on adaptive filter theory \cite{Haykin_Book2008}, the performance of the LMS algorithm may vary depending on the conditioning of the covariance matrix of the input basis functions. 
That is, in an environment in which the distribution changes with time, using  {a fixed set of} basis functions  {may cause degradation in} SIC performance.  
 {This motivates the} need for a method which generates  {orthonormal} basis functions that  {adapt} to changes in the distribution of the input signal. 
Fig.~\ref{fig:Non_stationary_AOP_diagram} depicts a block diagram of one such method. 
 {In this proposed method, the estimated moments of the transmit signal are updated by comparing the previous moment estimates to the current estimates, updating them based on some specified condition.}
Technically speaking, a change in the moments necessitates an update of the basis functions, but one could capture slight changes in the signal distribution by instead updating $\{\mathbf{h}_k[n]\}$. 
This avoids recomputing the polynomial basis functions and instead leverages the computational simplicity of adaptive algorithms, such as LMS.



As one example of such an approach, Algorithm~\ref{alg:Non-LMS} describes a method which uses nonlinear LMS to  {regularly} update the  {estimated moments} only in a predetermined interval.
{In practice, this interval could correspond to the duration of time slots defined by wireless standards, such as 5G and Wi-Fi.} 
We define $N_{\mathrm{max}}$ as the maximum number of samples used to estimate the transmit signal's statistics, $N_\mathrm{int}$ as the sample interval at which statistical information is fixed, and $n_\mathrm{s}$ as the start sample for statistical information collection. 

\textbf{LUT Adaptation:} 
 {As mentioned before, in practical wireless systems, transmit signals are non-stationary due to link adaptation, whereby a transmitter adapts its MCS according to the channel strength/quality.}
 {Since the set of possible MCSs are known a priori, the transmit signal's statistics and thus the orthonormal polynomials can be pre-computed for each MCS.
Storing these pre-computed polynomials in a look-up table (LUT) and then referencing them at run-time can reduce computational complexity.}
 {Our proposed method employing such a technique is depicted in Fig.~\ref{fig:LUT_diagram}.}



 {In Table~\ref{table:Comparison_table}, we compare existing digital SIC algorithms against our proposed technique in terms of model complexity and performance.
While the HP-W-LMS algorithm and our proposed AOP technique are on par with one another in terms of complexity, we will see in the next section that ours offers superior robustness/adaptation to changes in the transmit signal distribution and in the SI propagation channel.}


\begin{algorithm}[h]
 \caption{Digital SIC for non-stationary  {transmit signals} using nonlinear LMS}\label{alg:Non-LMS}
 \begin{algorithmic}[1]
  \State \textbf{Input:} $x[n]$, $P$, $N_{\mathrm{max}}$, and $N_{\mathrm{int}}$
  \State \textbf{Output: } $e[n]$
  \State \textbf{Initialize: } $n_\mathrm{s} \leftarrow 1$
 \For{$n = 1$ to $N$}
    \If{$n \leq n_\mathrm{s}+N_{\mathrm{max}}-1$}    
       \State \bf{Nonlinear basis }$\pmb{\phi}_p^{\sf AOP}(\mathbf{x}[n];\hat{{\mathbf{c}}}_p)$ \bf{generation via Algorithm}~\ref{alg:orthonormal_polynomial}
    \EndIf
    \If{$n = (n_\mathrm{s}+N_{\mathrm{int}}-1)$}
        \State $n_\mathrm{s} \leftarrow n$
    \EndIf
    \State $e[n] \leftarrow y[n]-\sum_{p=1}^\frac{P+1}{2} \mathbf{h}_p^{\sf H} \pmb{\phi}_p^{\sf AOP}(\mathbf{x}[n];\hat{{\mathbf{c}}}_p)$
    \State $\mathbf{h}_p[n+1] \leftarrow \mathbf{h}_p[n] + \mu_p e^*[n]\pmb{\phi}_p^{\sf AOP}(\mathbf{x}[n];\hat{{\mathbf{c}}}_p)$
 \EndFor
   \end{algorithmic}
\end{algorithm}

\begin{table}[h] 
\caption{Comparison of LMS-based digital SIC algorithms.}
\label{table:Comparison_table}
\resizebox{\columnwidth}{!}{ 
\begin{tabular}{|c|c|c|c|c|}
\hline
\begin{tabular}[c]{@{}c@{}} \bf{SI channel}\\ \bf{model}\end{tabular}    & \begin{tabular}[c]{@{}c@{}} \bf{Model}\\ \bf{complexity}\end{tabular} & \begin{tabular}[c]{@{}c@{}}\bf{Adaptation}\\ \bf{complexity}\end{tabular}  & \begin{tabular}[c]{@{}c@{}}\bf{Performance}\\ \bf{(+speed)}\end{tabular}         & \begin{tabular}[c]{@{}c@{}}\bf{Nonstationary}\\ \bf{distribution}\end{tabular} \\ \hline
\begin{tabular}[c]{@{}c@{}}Linear\\ +Wiener filter\end{tabular}      & $M$     & $\mathcal{O}(1) $        & \begin{tabular}[c]{@{}c@{}}(Only linear)\\ Limited/Fast\end{tabular}   & No   \\ \cline{1-5}  
\begin{tabular}[c]{@{}c@{}}Hammerstein\\ polynomial\\ +Wiener filter\end{tabular} & $ \frac{P+1}{2}M$      &   $\mathcal{O}(1)$   & Limited/Slow     & No      \\ \cline{1-5}  
\begin{tabular}[c]{@{}c@{}}IH polynomial\\ +Wiener filter\end{tabular}            &  $\frac{(P+1)(P+3)}{8}M$ &    $\mathcal{O}(1)$       & \begin{tabular}[c]{@{}c@{}}(Conditionally)\\ Optimal/Fast\end{tabular} & No    \\\cline{1-5}  
\begin{tabular}[c]{@{}c@{}}HP+Whitening\\ +Wiener filter\end{tabular}  & $\left( \frac{P+1}{2}\right)^2M$   & $\mathcal{O}(P^3)$   & Optimal/Fast   & Yes    \\ \cline{1-5}  
\begin{tabular}[c]{@{}c@{}}AOP\\ +Wiener filter\end{tabular}                      &   $\frac{(P+1)(P+3)}{8}M$ & $\mathcal{O}(P^3)$  & Optimal/Fast    & \begin{tabular}[c]{@{}c@{}}Yes\\ (More robust)\end{tabular}          \\ \hline
\end{tabular}
}
\end{table}

\section{Simulation Results} \label{sec:simulation}

In this section, we provide simulation results to verify the  {effectiveness} of the proposed SIC algorithm. 
In our simulations, we will consider both stationary and non-stationary transmit signals.
We consider nonlinear distortion based on \eqref{eq:Saleh_model}
where the small signal gain  {is} $\gamma = 3$, the transition sharpness  {is} $\beta = 0.09$, and the memory effect coefficients $\{h[n]\}$  {are} drawn from the complex Gaussian distribution with length $M=9$.
In addition, we assume that the noise floor is  {$-100$~dBm} \cite{Sabharwal:2014band}. 
 {To explore performance across input distributions, transmitted signals are generated according to a variety of distributions including Gaussian, QAM, and mixtures of such.}
 {After running our proposed digital SIC technique,} we measure its performance in the mean squared error (MSE) sense as
\begin{align}
    {\sf{MSE}} = \mathbb{E}\left[\left\vert y[n]-\hat{f}(\mathbf{x}[n])\right\vert^2\right].
\end{align}

In the simulation results that follow, we compare the proposed AOP-LMS algorithm against existing SIC techniques, including HP-LMS \cite{Book:ding2004digital}, HP-W-LMS (or pre-orthogonalization LMS) \cite{Korpi:2015}, and IH-LMS \cite{Jungyeon:2018}. 
In addition, we also compare our proposed algorithm against model-free SIC techniques, such as kernel LMS \cite{liu2008kernel} and the neural-network-based cancellation \cite{Alexios2018neural} for the stationary case.

\begin{figure}
    \centering
    \includegraphics[width=0.45\textwidth]{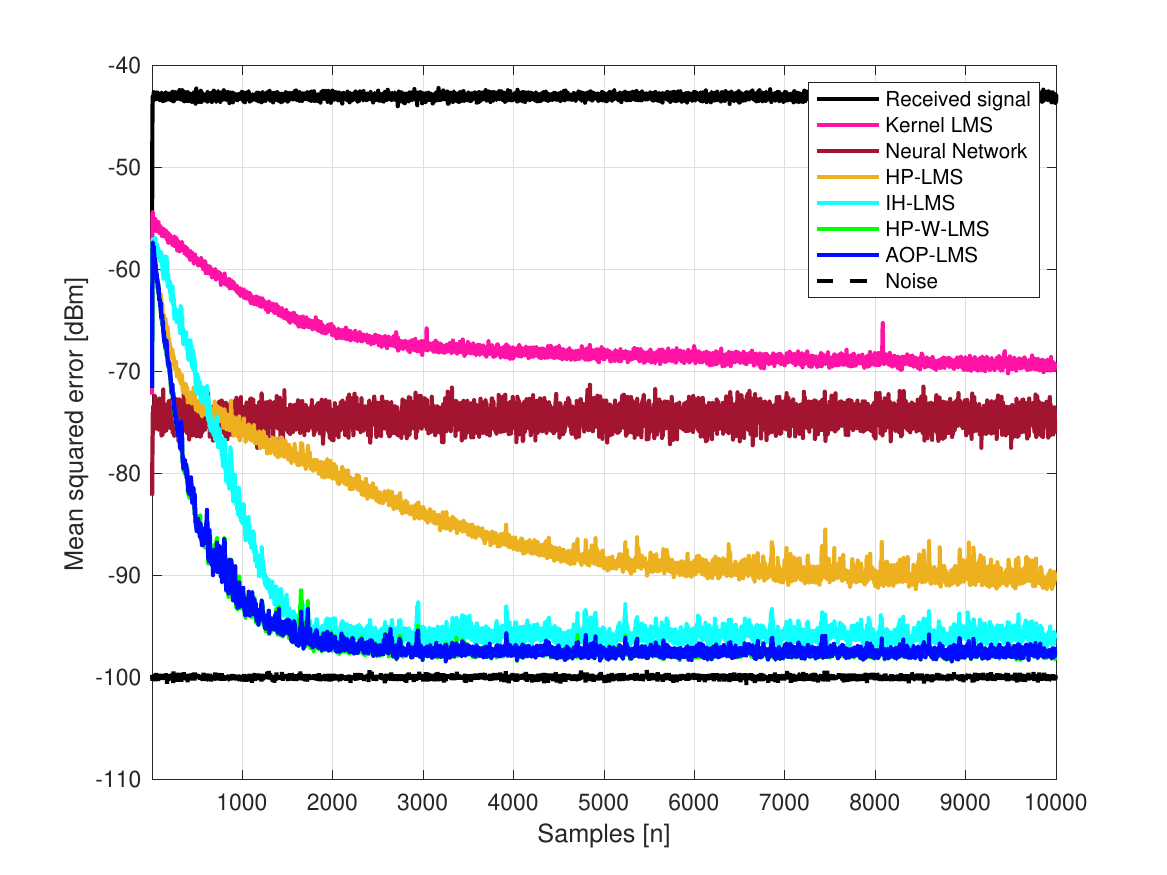}
    \caption{MSE performance of different SIC algorithms when $P=7$. The transmit signal follows the  {Gaussian-QAM mixture}.}
    \label{fig:mixed_MSE}
\end{figure}
\begin{figure}
    \centering
    \includegraphics[width=0.45\textwidth]{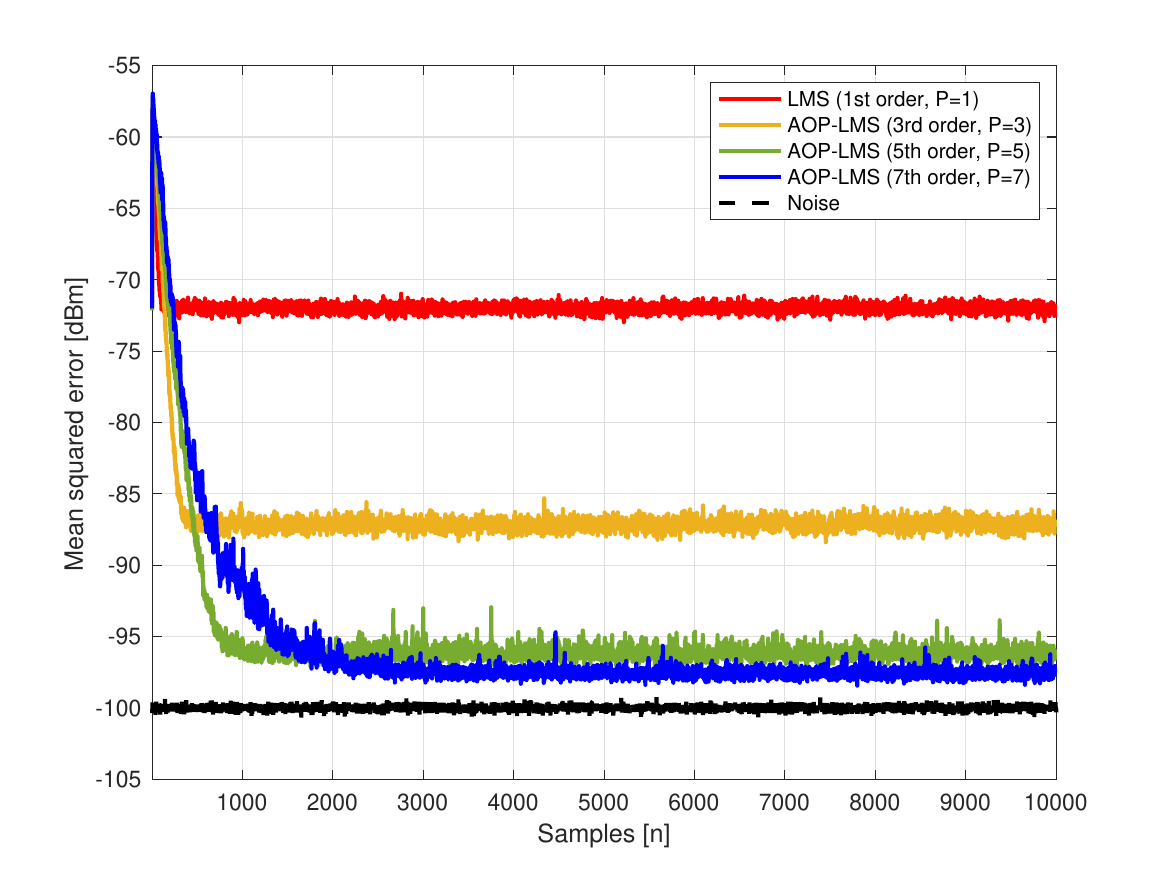}
    \caption{MSE performance of AOP-LMS  {for varying} polynomial orders $P\in \{1,3,5,7\}$. The transmit signal follows the Gaussian-QAM mixture.}
    \label{fig:mixed_MSE_order}
\end{figure}

\begin{figure}
    \centering
    \includegraphics[width=0.45\textwidth]{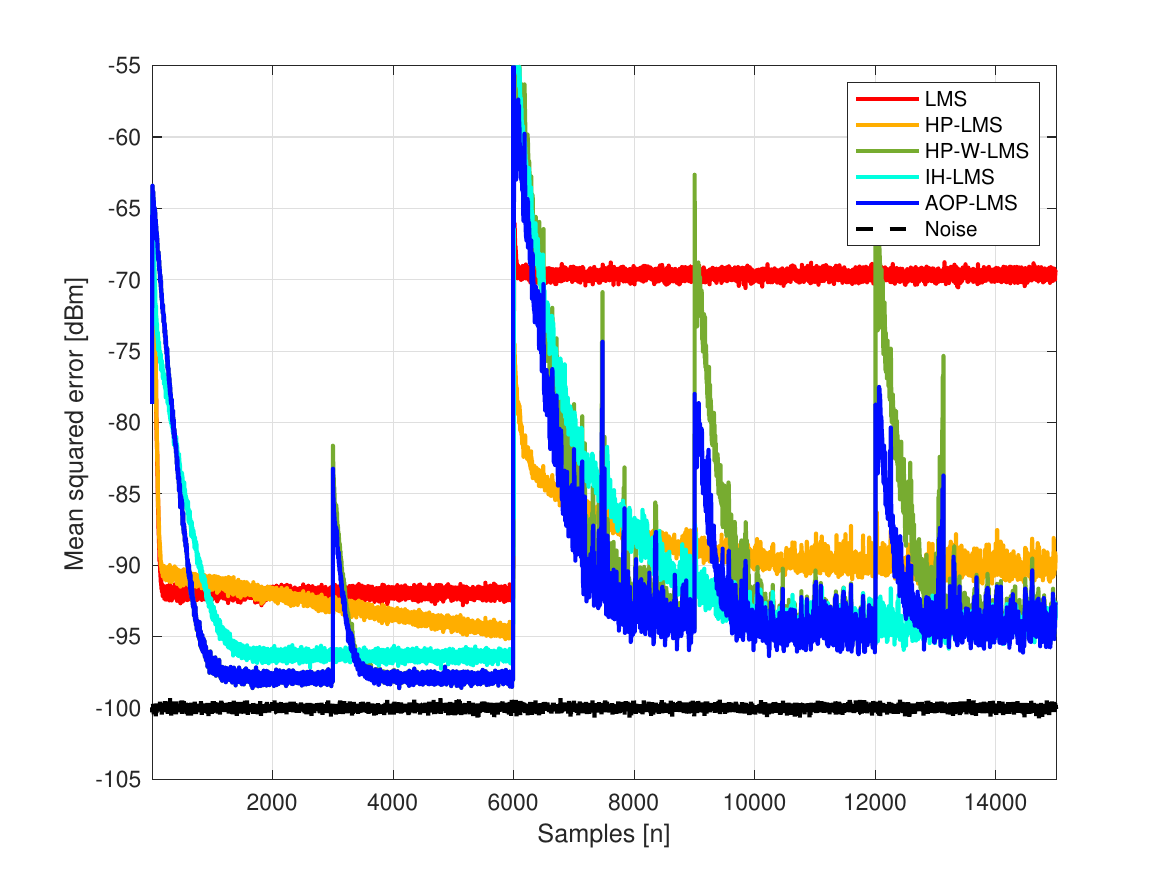}
    \caption{MSE performance of various SIC algorithms for non-stationary transmit signals.}
    \label{fig:nonstationary}
\end{figure}
\begin{figure}[h]
    \centering 
    \subfigure[QAM symbols with OFDM signaling.]{
    \includegraphics[width=0.45\textwidth]{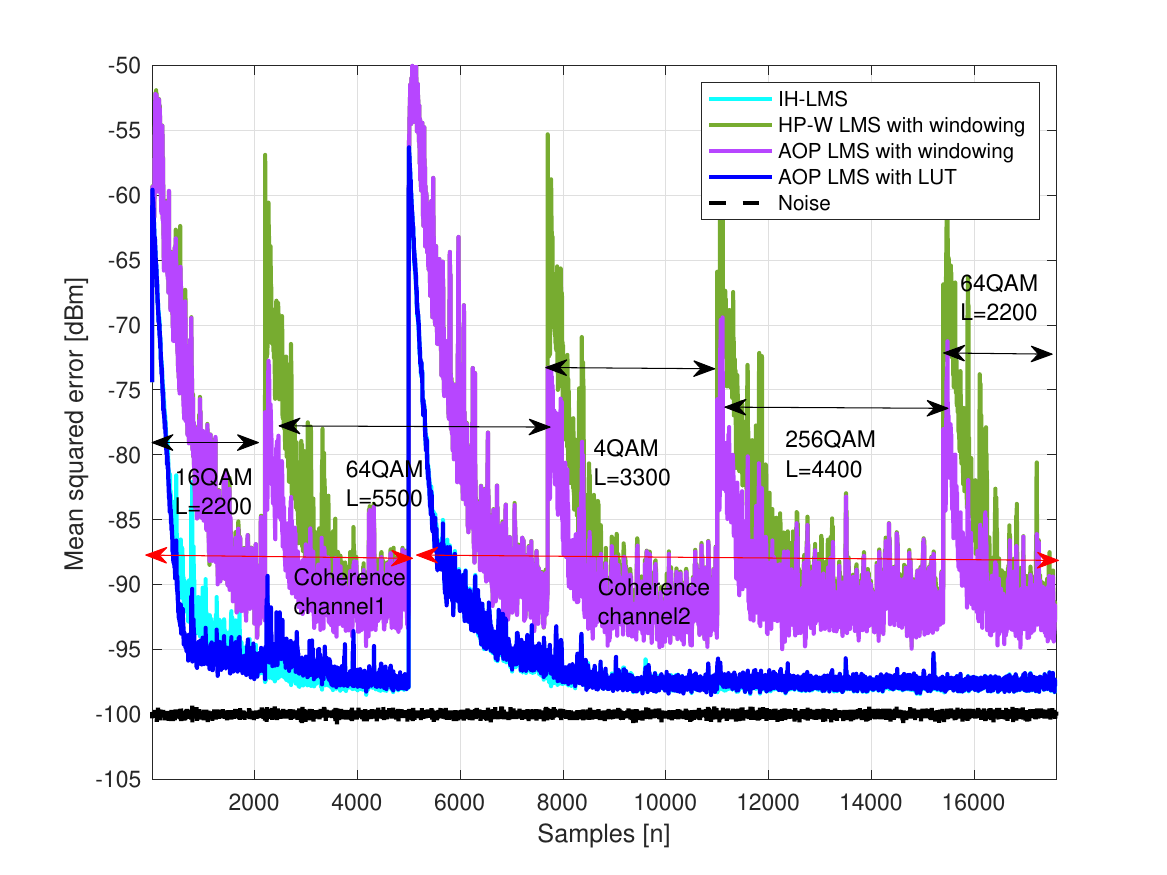}
    \label{fig:nonstationary_OFDM}
    }
    \subfigure[QAM symbols with SC-FDE signaling.]{
    \includegraphics[width=0.45\textwidth]{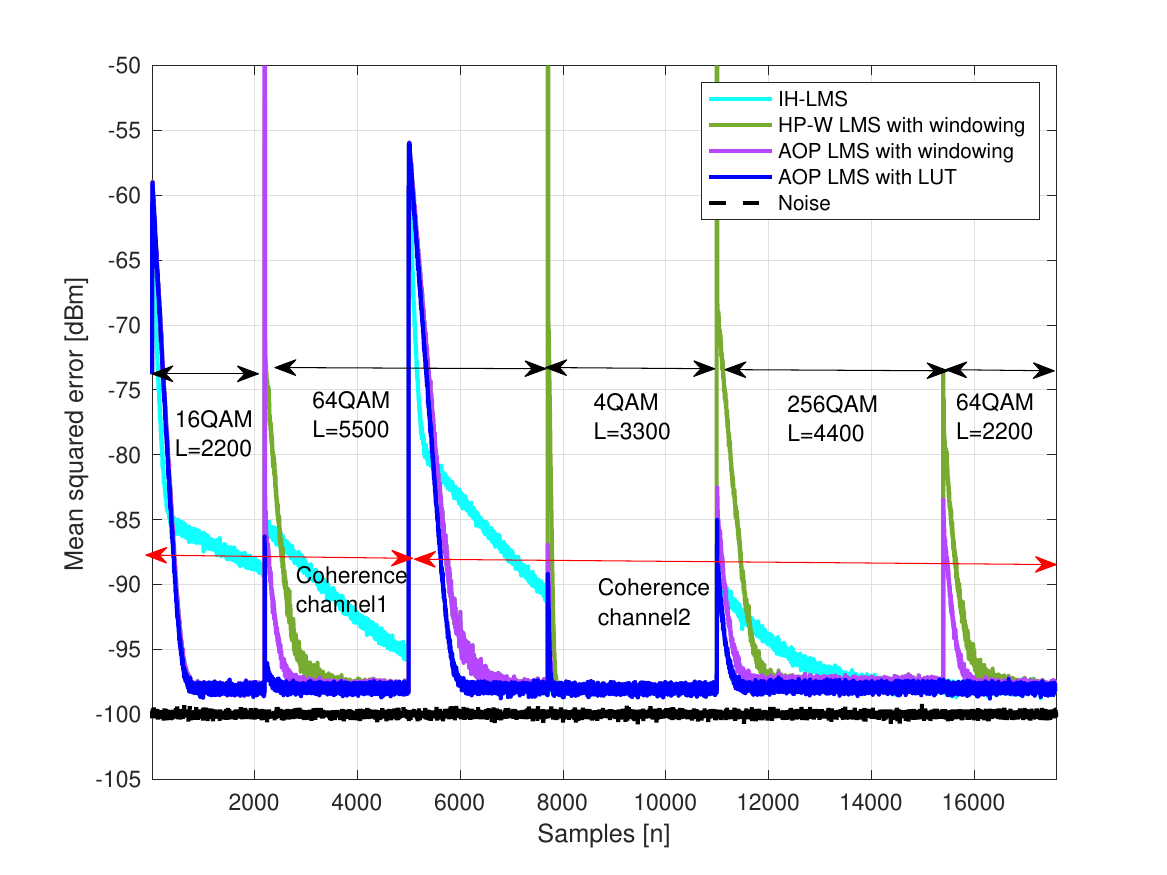}
    \label{fig:nonstationary_SCFDE}
    }
    \caption{
     {MSE achieved by various SIC techniques with QAM symbols under (a) OFDM signaling and (b) SC-FDE signaling. With OFDM, the resulting signal is approximately Gaussian by CLT, whereas under SC-FDE it is not and hence our proposed AOP-LMS scheme is superior.}
    }
    \label{fig:nonstationary_application}
\end{figure}

{\bf Stationary Transmit Signals:} Let $x_c[n]\sim \mathcal{CN}(0,1)$ be a complex Gaussian distribution with zero mean and variance $1$, and let $x_q[n]$ be a $4$ {QAM} signal as  {$x_q[n] \in \frac{1}{\sqrt{2}}\{1+\mathrm{j},1-\mathrm{j},-1+\mathrm{j},-1-\mathrm{j}\}$}. 
We assume that the signal is generated as the mixture $x[n] = x_c[n]+ x_q[n]$ and that its distribution does not change over time. 
Unlike the specific distributions in Section 4 that have well-known  {orthonormal} polynomials, the  {orthonormal} polynomials for this signal are unknown.
Nonetheless, as shown in Fig.~\ref{fig:mixed_MSE}, our AOP-LMS algorithm achieves the lowest MSE, on par with  {HP-W-LMS} and shows faster convergence than the other techniques.
 
Fig.~\ref{fig:mixed_MSE_order} shows the MSE performance of the proposed AOP-LMS algorithm for various polynomial orders $P$.
As expected, with sufficiently high-order polynomials, the proposed algorithm converges to a lower MSE. 
However, as the order increases, the number of filters increases, and thus, the rate of convergence slows slightly.


{\bf Non-Stationary Transmit Signals:} To evaluate the case of non-stationary transmit signals, we simulate the transmission of data following a uniform distribution with a range of $[-1,1]$ for the first 6000 samples, followed by the Gaussian-QAM mixture $x_c[n]+x_q[n]$ for the subsequent 9000 samples. 
The number of samples to learn the moments was set to  {$N_\mathrm{max} = 55$ (5\% of 1100 samples)}, with a basis initialization interval of 3000. 
The remaining 2945 samples were filtered using the orthonormal polynomial obtained from the empirically estimated moment information as the adaptive filter basis.

Fig.~\ref{fig:nonstationary} compares the MSE performance of different techniques for the aforementioned non-stationary transmit signal. 
 {The superiority of AOP-LMS and HP-W-LMS demonstrates that tailoring the polynomial basis functions to the transmit signal offers faster convergence and lower MSE compared to the other techniques, which merely fix their basis functions to an assumed underlying signal distribution.}
 {Moreover, notice that, when the signal distribution changes from uniform to the Gaussian-QAM mixture (beyond 6000 samples), AOP-LMS outperforms HP-W-LMS. 
This stems from the fact that, with a limited number of samples, HP-W-LMS can suffer from poor conditioning of its sample covariance matrix. 
In turn, this leads to ineffective whitening in pre-orthogonalization and subsequent spikes in MSE when triggering re-estimation of the signal statistics. 
AOP-LMS, on the other hand, is only affected by this poor conditioning in its higher-order polynomials, and as a result, its lower-order polynomials allow it to maintain cancellation by preserving some prior knowledge of the signal distribution and channel.} 





{\bf  {Adaptive Modulation and Coding}:}  {To combat} the  {time-variability of wireless channels due to fading},  {communication systems adapt their signal distributions over time according to the channel strength}. 
To simulate this, generate a transmit signal consisting of a sequence of 16QAM symbols, followed by 64QAM symbols, then 4QAM, then 256QAM, and finally 64QAM, whose corresponding durations are 2200, 5500, 3300, 4400, and 2200 samples. 
 {In addition, to demonstrate that our approach can adapt to changes in the channel, we draw an independent realization of the channel at the 5000th sample.} 
 {We consider transmission of these QAM symbols using two commonly used waveforms: OFDM and  {single-carrier frequency domain equalization (SC-FDE)}.
With OFDM, the QAM symbols are treated as frequency-domain symbols and are transformed to the time domain using the FFT. 
With SC-FDE, the QAM symbols are transmitted directly using conventional single-carrier modulation.}


 {Fig.~\ref{fig:nonstationary_application} shows the MSE attained with various SIC algorithms  {for} (a) OFDM and (b) SC-FDE.
In Fig.~\ref{fig:nonstationary_OFDM}, it can be seen that, under OFDM signaling, the proposed AOP-LMS technique using the LUT achieves impressive MSE performance and fast convergence, on par with the IH-LMS. 
Leveraging  the LUT allows the proposed method to adapt quickly to changes in the constellation, when compared to those that rely on run-time estimation of the signal statistics. 
It is well known that the distribution of OFDM signals are approximately Gaussian by CLT \cite{Jungyeon:2018}, and thus, IH-LMS attains  performance comparable to the proposed method.
In Fig.~\ref{fig:nonstationary_SCFDE}, on the other hand, we see that IH-LMS no longer offers adequate cancellation under SC-FDE signaling. 
This is because the QAM symbols are transmitted directly, without an FFT applied, and thus are no longer approximately Gaussian. 
As a result, the proposed AOP-LMS technique offer the lowest MSE and the fastest convergence, especially when leveraging the LUT with pre-computed polynomials. 
For the particular case of 4QAM symbols, we see comparable performance across all four schemes; this is courtesy of the fact that the basis is purely linear, as highlighted previously.}





\section{Experiments with a Wireless Testbed}

In this section, we present experimental results  {collected with} a single-input single-output wireless testbed. 
To accomplish this, we  {applied} our algorithms on real-world data collected with software-defined radio (SDR) platforms  {developed} by National Instruments.
A summary of implementation details is shown in Table \ref{USRP:spec}, and a photo of the hardware platform is shown in Fig.~\ref{fig:PXI_chassis}.


\begin{figure}
    \centering
    \includegraphics[width=0.35\textwidth]{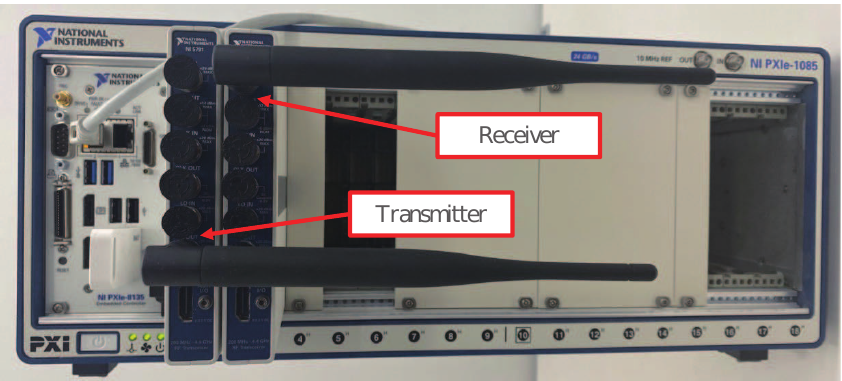}
    \caption{ {The NI PXIe-1085 chassis and NI-5791 modules used to experimentally validate the proposed SIC algorithms. Separate antennas were used for transmission and reception.}}
    \label{fig:PXI_chassis}
\end{figure}

\begin{table}[t]
\centering
\caption{Specifications of the  {experimental} implementation} \label{USRP:spec}
\begin{tabular}{|cc|}
\hline
\multicolumn{2}{|c|}{\textbf{Testbed Parameters}}                                   \\ \hline
\multicolumn{1}{|c|}{Signal bandwidth}            & 10 MHz                      \\ \hline
\multicolumn{1}{|c|}{Center frequency}            & 2.45 GHz                    \\ \hline
\multicolumn{1}{|c|}{Sampling rate}               & 120 MHz                     \\ \hline
\multicolumn{1}{|c|}{Upsampling rate}               & 12                      \\ \hline
\multicolumn{2}{|c|}{\textbf{SIC Algorithm Parameters}}                                   \\ \hline
\multicolumn{1}{|c|}{Highest  {nonlinearity} order}           & 5                          \\ \hline
\multicolumn{1}{|c|}{The number of filter taps}   & 11                         \\ \hline
\end{tabular}
\end{table}

\begin{figure}[h]
    \centering 
    \subfigure[Constellation (complex Gaussian symbols).]{
    \includegraphics[width=0.41\textwidth]{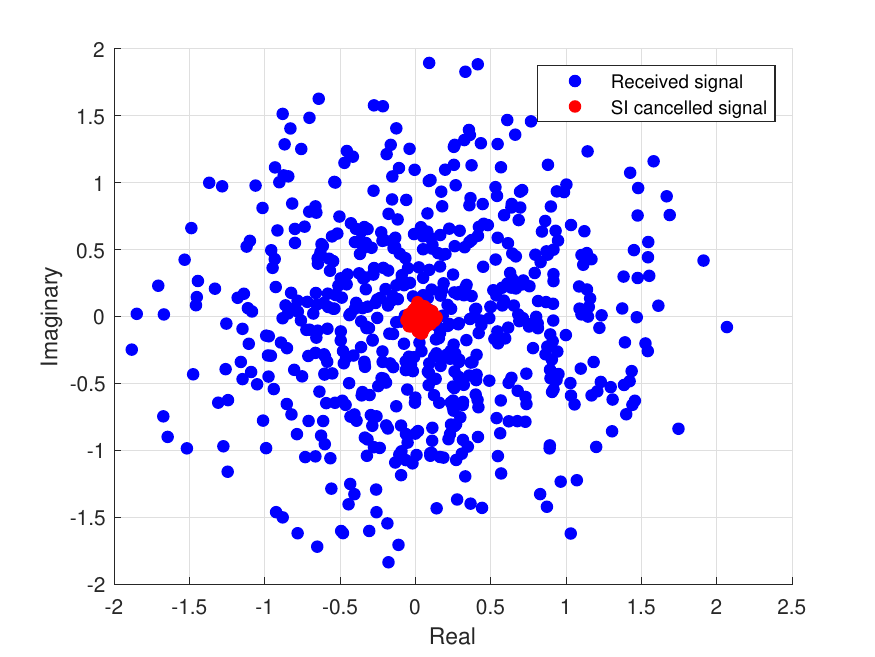}
    \label{fig:USRP_Gaussian_scatter}
    }
    \subfigure[Power spectral density (complex Gaussian symbols).]{
    \includegraphics[width=0.41\textwidth]{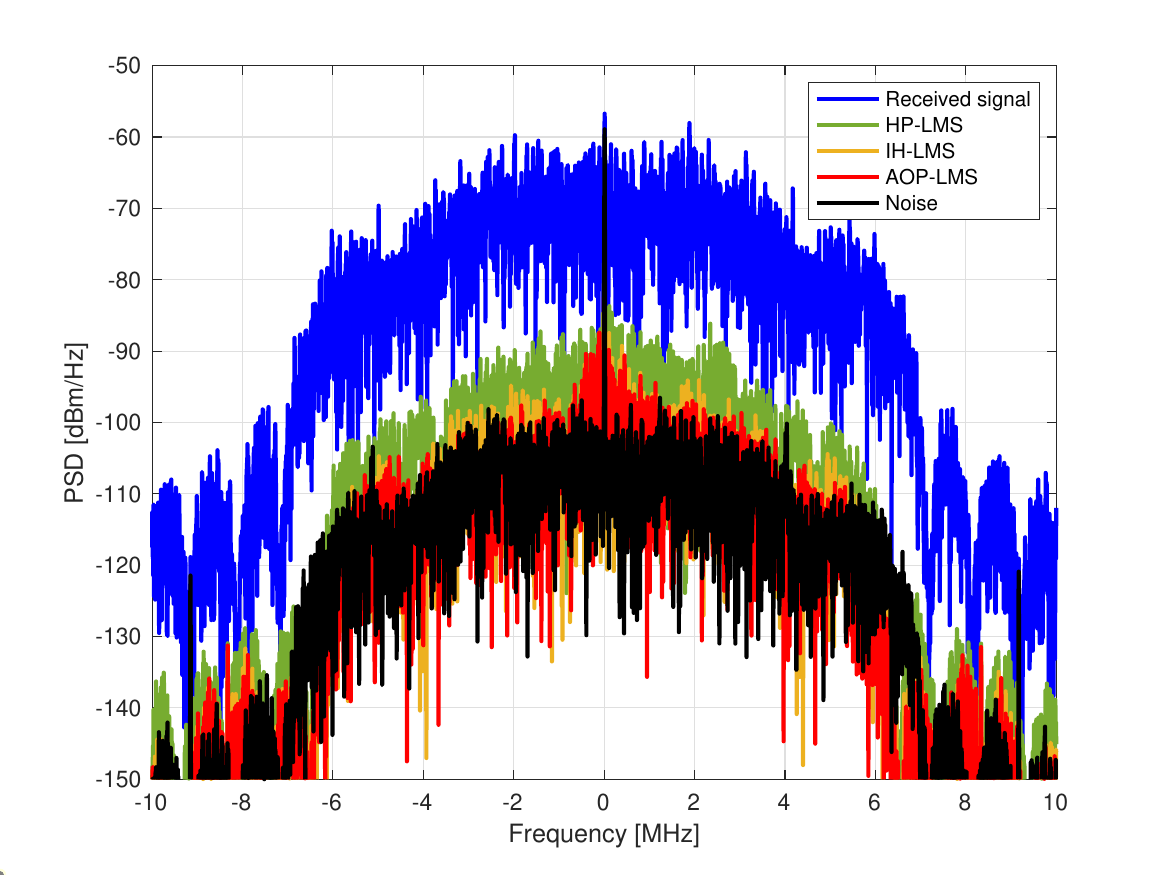}
    \label{fig:USRP_Gaussian_PSD}
    }
    \caption{
    (a) A constellation of the received signal (blue) and the residual SI (red). 
    (b) The power spectral density of various SIC algorithms for a $10$ MHz waveform at $2.45$ GHz. 
    In both, complex Gaussian symbols were transmitted.
    }\label{fig:USRP_Gaussian_}
\end{figure}

\begin{figure}[h]
    \centering 
    \subfigure[Constellation (uniform symbols).]{
    \includegraphics[width=0.41\textwidth]{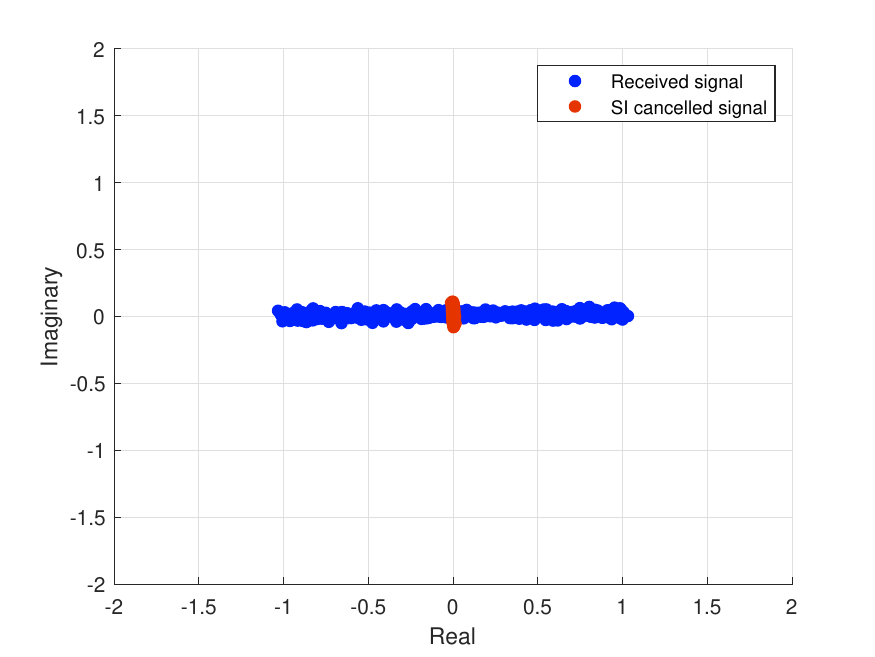}
    \label{fig:USRP_uniform_scatter}
    }
    \subfigure[Power spectral density (uniform symbols).]{
    \includegraphics[width=0.41\textwidth]{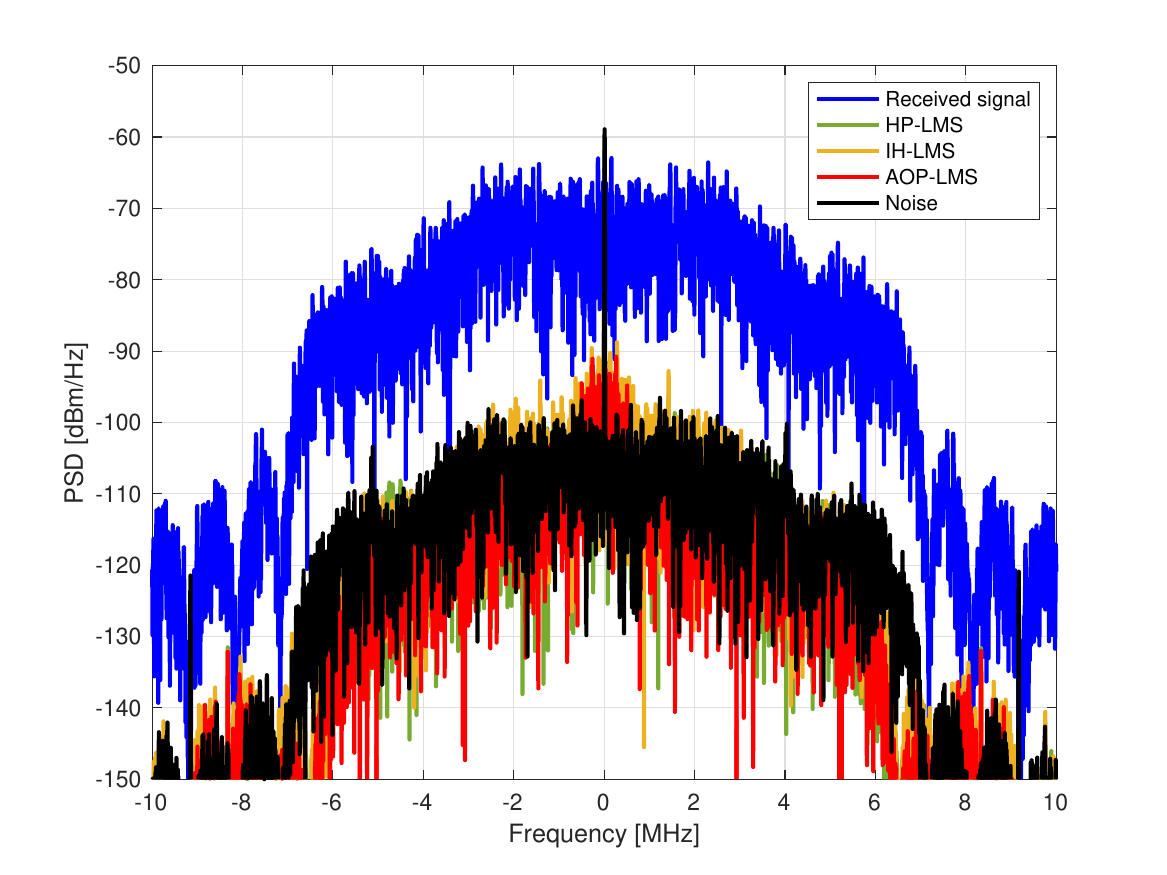}
    \label{fig:USRP_uniform_PSD}
    }
    \caption{
    (a) A constellation of the received signal (blue) and the residual SI (red). 
    (b) The power spectral density of various SIC algorithms for a $10$ MHz waveform at $2.45$ GHz. 
    In both, uniform symbols were transmitted.
    }\label{fig:USRP_Uniform_}
\end{figure}

\begin{figure}[h]
    \centering 
    \subfigure[Constellation (Gaussian-QAM mixture).]{
    \includegraphics[width=0.41\textwidth]{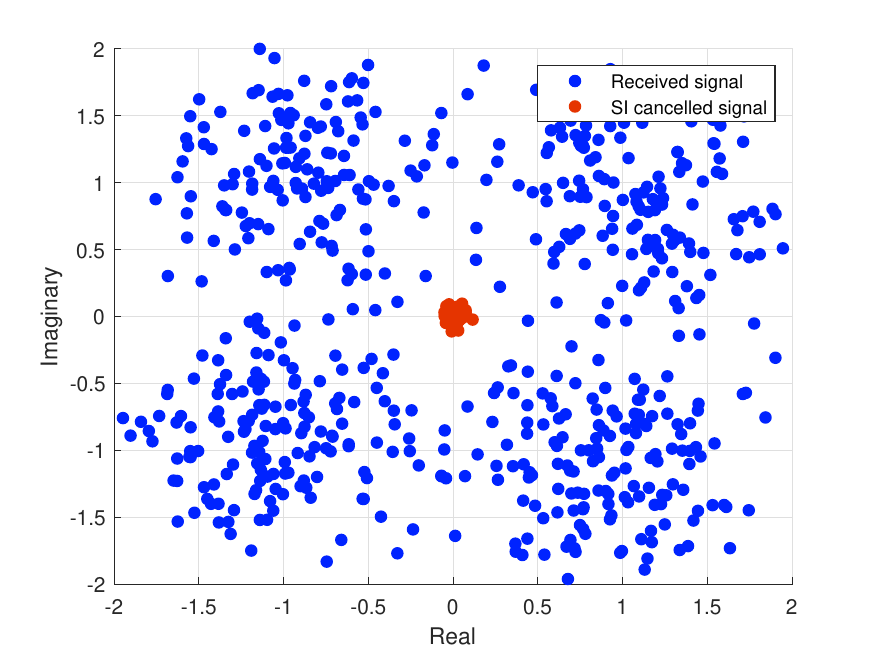}
    \label{fig:USRP_Gaussian_Qam_scatter}
    }
    \subfigure[Power spectral density (Gaussian-QAM mixture).]{
    \includegraphics[width=0.41\textwidth]{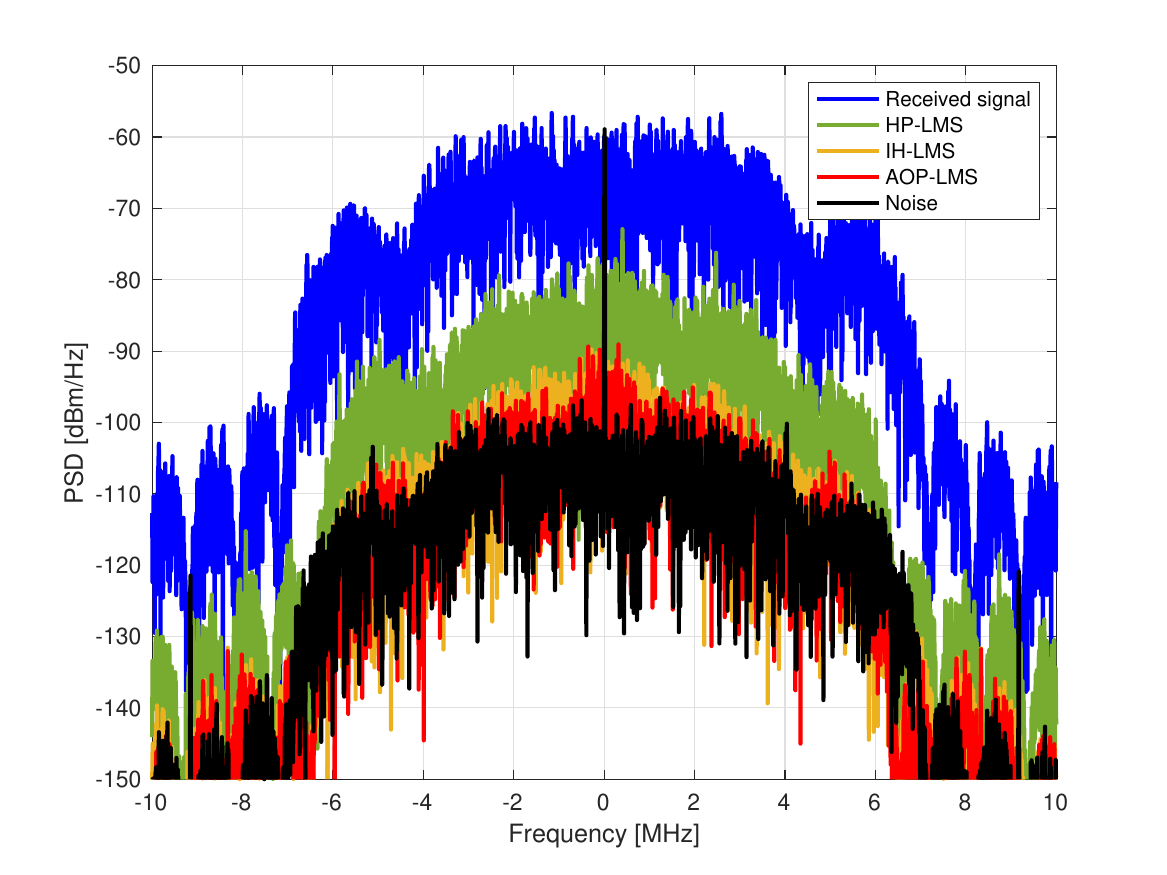}
    \label{fig:USRP_Gaussian_Qam_PSD}
    }
    \caption{
    (a) A snap shot of the received signal and the SI-cancelled signal of the mixed distribution,
    (b) Power spectral density of various SIC algorithms with a $10$ MHz waveform, measured at $2.45$ GHz of the mixed distribution.
    }\label{fig:USRP_Gauss_qam_}
\end{figure}

Figs.~\ref{fig:USRP_Gaussian_}--\ref{fig:USRP_Gauss_qam_} show the IQ  {scatter} plot and power spectral density (PSD) for various signal distributions: (i) complex Gaussian, (ii) uniform, and (iii)  {a Gaussian-QAM mixture}. 
Each IQ scatter plot shows the constellation of the transmitted signal (in blue) alongside the residual SI signal after cancellation with the proposed AOP-LMS technique (without LUT). 
Across all three, Figs.~\ref{fig:USRP_Gaussian_}--\ref{fig:USRP_Gauss_qam_}, we see that AOP-LMS cancels SI by about $30$~dB to near the noise floor.
The robustness of AOP-LMS is clear, as it is the most capable of canceling SI across all three considered signal distributions. 

For the complex Gaussian distribution of Fig.~\ref{fig:USRP_Gaussian_}, IH-LMS is naturally on par with AOP-LMS, since it is designed specifically for Gaussian signals. 
For the uniform distribution of Fig.~\ref{fig:USRP_Uniform_}, HP-LMS is on par with AOP-LMS and slightly better than IH-LMS  {as shown in numerical results in Fig.~\ref{fig:nonstationary}.} 
For the Gaussian-QAM mixture of Fig.~\ref{fig:USRP_Gauss_qam_}, AOP-LMS is most superior since it is able to tailor its orthonormal polynomials specifically to the signal distribution; IH-LMS is not far behind, however, given its strength in handling Gaussian signals.
Overall, these experimental results demonstrate that the proposed algorithm, AOP-LMS, is the most capable in canceling real-world SI across a variety of signal distributions. 
A full video of the experimental demonstration can be accessed at \textit{https://wireless-x.korea.ac.kr/full-duplex-radios}.

\section{Conclusion}

In this paper, we introduced a new digital SIC algorithm referred to as AOP-LMS that boasts remarkable robustness in the face of non-stationary data distributions. 
The innovative concept behind the proposed AOP-LMS algorithm involves the adaptive formation of basis functions by estimating the moments of the data symbols. 
This research has uncovered that it is feasible to construct  {orthonormal} basis functions through the generalization of HPs, even for arbitrarily distributed input data. 
Our simulations confirmed that the AOP-LMS SIC algorithm outperforms current state-of-the-art SIC algorithms when dealing with non-stationary input distributions. 
Moreover, by augmenting AOP-LMS with a pre-computed LUT based on the known set of MCSs a priori can reduce the overall computational costs at run-time in practical wireless systems.
We substantiated all of these results through both extensive simulation and experimentation, solidifying its potential role in unlocking FD wireless systems for next-generation networks.
Relevant future work includes extending the proposed techniques to multi-antenna systems and jointly optimizing model parameters by leveraging advancements in machine learning.

\begin{appendices}
\section*{Appendix}
\section*{Proof of Theorem 1}
For finding coefficients of  {a} $(2p-1)$th order polynomial where $p>1$, we assume that coefficients of polynomials whose orders are lower than $p$ satisfy each order of orthogonal conditions. The $(2p-1)$th order polynomial should have coefficients that satisfy $(p-1)$ conditions as follows:
\begin{align}\label{eq:pth_iter}
    \mathbb{E}[\phi_{p}^{*}(x[n];\mathbf{c}_{p})\phi_{k}(x[n];\mathbf{c}_{k})] = 0, \quad k=1,2,\dots,p-1.
\end{align}
Rewriting \eqref{eq:pth_iter} for all $k$, we obtain
\begin{align}\label{eq:Ortho_cond_for_pth}
    \begin{cases}
        \mathbb{E}[\phi_{p}^{*}(x[n];\mathbf{c}_{p})  c_{1,0}x[n]]=0, & (k=1), \\
        \mathbb{E}[\phi_{p}^{*}(x[n];\mathbf{c}_{p})   (c_{2,1}|x[n]|^2x+c_{2,0}x[n])]=0, & (k=2), \\
        \quad \quad \vdots \\
        \mathbb{E}[\phi_{p}^{*}(x[n];\mathbf{c}_{p}) \phi_{p-1}(x[n];\mathbf{c}_{p-1})]=0,  & (k=p-1),
    \end{cases}
\end{align}
where $\mathbf{c}_{k}$  {are} determined coefficients that satisfy the orthogonal condition.
Here, we consider that polynomials are monic polynomials, i.e., $c_{p,p-1}=1, \forall p$.
To construct the orthogonal basis functions, we require the moment information $\pmb{\mu}_2^{4p-4}$ as in \eqref{eq:Ortho_cond_for_pth}.

Let the unknown parameter vector $\bar{\mathbf{c}}_{p} = [c_{p,0},c_{p,1}\dots,c_{p,p-2}]^\top$ be a subvector of  $\mathbf{c}_p=[\bar{\mathbf{c}}_{p}, 1]^\top$. Then, the orthogonality conditions in \eqref{eq:Ortho_cond_for_pth} boil down to a matrix form:
\begin{align}
    \mathbf{C}_p\mathbf{M}_p\bar{\mathbf{c}}_p + \mathbf{C}_p \pmb{\mu}_{2p}^{4p-4} &= \mathbf{0},
\end{align}
where $\mathbf{C}_p$ is a lower triangular matrix. For example,when $p=3$, the orthogonality condition in the matrix form is given by
    \begin{align}\begin{bmatrix}
    1& 0 \\  c_{2,0} & 1
    \end{bmatrix}
    \begin{bmatrix}
    \mu_2 & \mu_4 \\ \mu_4  & \mu_6
    \end{bmatrix}
    \begin{bmatrix}
    c_{3,0} \\ c_{3,1}
    \end{bmatrix}
    + \begin{bmatrix}
    1& 0 \\  c_{2,0} & 1
    \end{bmatrix}
    \begin{bmatrix}
    \mu_6 \\ \mu_8
    \end{bmatrix} = \mathbf{0}.
    \end{align}
 Since the matrix $\mathbf{C}_p$ is invertible, the orthogonality condition boils down to
 \begin{align}
 \mathbf{M}_p\bar{\mathbf{c}}_p +  \pmb{\mu}_{2p}^{4p-4} = \mathbf{0}.
 \end{align}
 Consequently, the unique coefficient vector $\bar{\mathbf{c}}_p$ exists, provided that the Hankel matrix $\mathbf{M}_p$ is invertible. 

Since we started with a monic polynomial, usually the self inner product of a polynomial is not $1$. 
Since the power of the $p$th polynomial is $\mathbb{E}[|\phi_p(x;\mathbf{c}_p)|^2]$, the normalization factor  {becomes} 
\begin{align} \label{eq:pol_normalization_factor}
    z = \sqrt{\mathbb{E}[|\phi_p(x;\mathbf{c}_p)|^2]}.
\end{align}
 This completes the proof.

\section*{Proof of Theorem 2}

The response function of the nonlinear PA is approximated with  {an} odd order polynomial \cite{Book:ding2004digital},\cite{Raich2002} as
\begin{align}\label{eq:memory_PA_response_scalar}
    f(x[n]) = \sum_{k=0}^{\frac{P-1}{2}} \sum_{m=0}^{L-1} b_{km} x[n-m]|x[n-m]|^{2k},
\end{align}
where $b_{km}$ is a coefficient of the response function and $x$ is the baseband power amplifier input. In wireless communication, the response function of the PA  {offer} only  {consider} odd order  {terms, since} the signals generated by even order terms are  {out of the band of interest}. For vector notation, let $\phi_p(x)$ be a polynomial function defined as
\begin{align}
    \phi_p(x) = x|x|^{2(p-1)},
\end{align}
 $\pmb{\phi}_p(\mathbf{x}[n]) = [\phi_p(x[n]), \phi_p(x[n-1]), \dots, \phi_p(x[n-(L-1)])]^{\sf H}$, and the weight vector $\bar{\mathbf{h}}_k[n] = [b_{k0},b_{k1},\dots, b_{k(L-1)}]^{\sf H}$.
 Then, \eqref{eq:memory_PA_response_scalar} becomes 
 \begin{align}\label{eq:Memory_vectorized_representation}
     f(x[n]) = \sum_{k=1}^{\frac{P+1}{2}} \pmb{\phi}_k(\mathbf{x}[n])^{\sf H} \bar{\mathbf{h}}_k[n].
 \end{align}

 Let $\phi_p^{\sf OP}(x)$ be a orthonormal polynomial function which satisfies condition \eqref{eq:orthogonal_condition}. From  Theorem 1, we know that  {orthonormal} polynomials are expressed as linear combinations of $\phi_p(x)$. In addition, when the Hankel matrix consisting of moments of a signal is non-singular, the coefficients constitute a full-rank lower triangle matrix. Therefore, the vectorized polynomial and vectorized orthonormal polynomial have the following relationship:
\begin{align}\label{eq:relation_vectorized_basis}
    \begin{bmatrix}
        \pmb{\phi}^{\sf OP}_1(\mathbf{x}[n])^{\sf H} \\ \pmb{\phi}^{\sf OP}_2(\mathbf{x}[n])^{\sf H}  \\ \vdots \\ \pmb{\phi}^{\sf OP}_p(\mathbf{x}[n])^{\sf H} 
    \end{bmatrix}
    = 
    \mathbf{C}_p 
    \begin{bmatrix}
        \pmb{\phi}_1(\mathbf{x}[n])^{\sf H} \\ \pmb{\phi}_2(\mathbf{x}[n])^{\sf H} \\ \vdots \\ \pmb{\phi}_p(\mathbf{x}[n])^{\sf H} 
    \end{bmatrix},
\end{align}
 where $\mathbf{C}_p$ is invertible coefficient matrix. When merging \eqref{eq:Memory_vectorized_representation} and \eqref{eq:relation_vectorized_basis}, the nonlinear response is represented by the orthonormal basis as
 \begin{align}
     f(x[n]) = \sum_{k=1}^{\frac{P+1}{2}} \pmb{\phi}_k^{\sf OP}(\mathbf{x}[n])^{\sf H} \mathbf{h}_k[n],
 \end{align}
where 
 \begin{align}
     \begin{bmatrix}
         \mathbf{h}_1[n] & \dots &\mathbf{h}_p[n]
     \end{bmatrix}
     =
     \begin{bmatrix}
         \bar{\mathbf{h}}_1[n] & \dots & \bar{\mathbf{h}}_p[n]
     \end{bmatrix}
     \mathbf{C}_p^{-1}.
 \end{align}
This completes the proof.

\end{appendices}

\bibliographystyle{IEEEtran}
\bibliography{IEEEabrv,Journal_bib}

\end{document}